\theoremstyle{plain}
\newtheorem{lemma}{Lemma}
\newtheorem{theorem}{Theorem}
\newtheorem{proposition}{Proposition}
\newtheorem{corollary}{Corollary}
\theoremstyle{definition}
\newtheorem{conjecture}{Conjecture}
\theoremstyle{remark}
\newcommand{\ema}[1]{\ensuremath{#1}\xspace}
\newcommand{\tps}{\ema{T}}
\newcommand{\wtps}{\ema{T_{\mathit{wc}}}}
\newcommand{\energy}{\ema{E}}
\newcommand{\ex}[1]{\ema{\mathbb{E}(#1)}}
\newcommand{\pro}[1]{\ema{\mathbb{P}(#1)}}
\newcommand{\tlost}{\ema{\tps_{\mathit{lost}}}}
\newcommand{\elost}{\ema{\energy_{\mathit{lost}}}}
\newcommand{\p}{\ema{\text{p}}}
\newcommand{\tc}{\ema{\tps_{C}}}
\newcommand{\ec}{\ema{\energy_{C}}}
\newcommand{\tex}{\ema{\tps_{\mathrm{exec}}}}
\newcommand{\treex}{\ema{\tps_{\mathrm{reexec}}}}
\newcommand{\pf}{\ema{P_{\mathrm{fail}}}}
\newcommand{\pfun}{\ema{P_{\mathrm{fail}}^{1}}}
\newcommand{\pfdeux}{\ema{P_{\mathrm{fail}}^{2}}}
\newcommand{\spe}{\ema{s}}
\newcommand{\spr}{\ema{\sigma}}
\newcommand{\dl}{\ema{D}}
\newcommand{\RRme}[2]{\ifthenelse{\boolean{isRR}}{#1}{#2}}
\newcommand{\singlef}{\textsc{SingleFail}\xspace}
\newcommand{\multif}{\textsc{MultipleFail}\xspace}
\newcommand{\singlec}{\textsc{SC}\xspace}}
\newcommand{\multic}{\textsc{MC}\xspace}}
\newcommand{\singles}{\textsc{SS}\xspace}}
\newcommand{\multis}{\textsc{MS}\xspace}}
\newcommand{\hd}{\textsc{HD}\xspace}}
\newcommand{\ed}{\textsc{ED}\xspace}}
\newcommand{\trfail}{\textsc{Transient}\xspace}
\newcommand{\refequ}[1]{Equation~(\ref{eq.#1})}
\title{Energy-aware checkpointing of divisible tasks\\ with soft or hard deadlines} 
\date{\today}
\author{Guillaume Aupy\thanks{Ecole Normale Sup\'erieure de Lyon, France} \and Anne Benoit\footnotemark[1] \and Rami Melhem\thanks{University of Pittsburgh, USA} \and Paul Renaud-Goud\footnotemark[1]~\thanks{LABRI - University of Bordeaux, France} \and Yves Robert\footnotemark[1]~\thanks{University of Tennessee Knoxville, USA} \\ 
\{guillaume.aupy, anne.benoit, yves.robert\}@ens-lyon.fr; \\
paul.renaud-goud@labri.fr; melhem@cs.pitt.edu}
\author{\IEEEauthorblockN{Guillaume Aupy\IEEEauthorrefmark{1},
Anne Benoit\IEEEauthorrefmark{1},
Rami Melhem\IEEEauthorrefmark{3}, 
Paul Renaud-Goud\IEEEauthorrefmark{1}\IEEEauthorrefmark{4} and
Yves Robert\IEEEauthorrefmark{1}\IEEEauthorrefmark{5}}
\IEEEauthorblockA{\IEEEauthorrefmark{1}LIP - ENS Lyon - INRIA, France}
\IEEEauthorblockA{\IEEEauthorrefmark{3}University of Pittsburgh, USA}
\IEEEauthorblockA{\IEEEauthorrefmark{4}LaBri - University of Bordeaux, France}
\IEEEauthorblockA{\IEEEauthorrefmark{5}University of Tennessee, USA}}
\begin{document}
\maketitle

\begin{abstract}
In this paper, we aim at minimizing the energy consumption when executing a divisible workload under
a bound on the total execution time, while resilience is provided through checkpointing.
We discuss several variants of this multi-criteria problem. Given the workload,
we need to decide how many chunks to use, what are the sizes of these chunks,
and at which speed each chunk is executed. Furthermore, since a failure may occur
during the execution of a chunk, we also need to decide at which speed a chunk
should be re-executed in the event of a failure. 
The goal is to minimize the expectation 
of the total energy consumption, while enforcing a deadline on the execution time, 
that should be met either in expectation (soft deadline), or in the worst case (hard deadline). 
For each problem instance, we propose either an exact solution, or a function that 
can be optimized numerically. The different models are then compared through
an extensive set of experiments.
\end{abstract} 

\section{Introduction}

Divisible load scheduling has been extensively studied in the 
past years~\cite{Bharadwaj:1996,divisibleload}.  For divisible applications, 
the computational workload can be  divided into an arbitrary number of
chunks, whose sizes can be freely chosen by the user.  
Such applications occur for instance
in the processing of very large data files, e.g.,  signal processing, linear algebra computation, 
or DNA sequencing.
Traditionally, the goal is to minimize the makespan of the application, i.e., 
the total execution time. 

Nowadays, high performance computing is facing a major challenge with 
the increasing frequency of failures~\cite{IESP-Exascale}. 
There is a need to use fault tolerance or resilience mechanisms to ensure 
the efficient progress and correct termination of the applications in the presence of failures.
A well-established method to deal with failures is {\em checkpointing}: a checkpoint
is taken at the end of the execution of each chunk. During the checkpoint, 
we check for the accuracy of the result; if the result is not correct, due to a transient
failure (such as a memory error or software error), the chunk is re-executed. This model with 
transient failures is one of the most used in the literature, 
see for instance~\cite{Zhu04EEM,Degal05SEI}. 

Furthermore, energy-awareness is now recognized as a first-class constraint in the
design of new scheduling algorithms. To help reduce energy
dissipation, current processors from AMD, Intel and Transmetta allow
the speed to be set dynamically, using a dynamic voltage and frequency
scaling technique (DVFS). Indeed, a processor running at speed $s$
dissipates $s^3$ watts per unit of time~\cite{pow3IPDPS}.  
We therefore focus on two objective functions: execution time and
energy consumption, while resilience is ensured through checkpointing. 
More precisely, we aim at minimizing energy consumption, including that of 
checkpointing and re-execution in case of failure,
while enforcing a bound on execution time. 

Given a workload~$W$, we need to decide how many chunks
to use, and of which sizes. Using more chunks leads to a higher checkpoint cost, but smaller chunks 
imply less computation loss (and less re-execution) when a failure occurs. We assume that a chunk can fail only once, 
i.e., we re-execute each chunk at most once. Indeed, the probability that a fault 
would strike during both the first execution and the re-execution is negligible.
\RRme{We discuss the accuracy of this assumption in Section~\ref{sec.compare}. }
{The accuracy of this assumption is discussed in the companion research report~\cite{rr-abmrr}. }

Due to the probabilistic nature of failure hits, it is natural to
study the expectation \ex{\energy} of the energy consumption, because it represents the average cost
over many executions. As for the bound \dl on execution time (the deadline), 
there are two relevant scenarios:
either we enforce that this bound is a \emph{soft deadline} to be met in expectation,
or we enforce that this bound is a \emph{hard deadline} to be
met in the worst case. The former scenario corresponds to flexible environment 
where task deadlines can be viewed as
average response times~\cite{buttazzo2005}, while the latter scenario 
corresponds to real-time environments where task deadlines are always
strictly enforced~\cite{Stankovic:1998:DSR:552538}.
In both scenarios, we have to determine the number of chunks, their sizes,
and the speed at which to execute (and possibly re-execute) every
chunk.

Our first contribution is to formalize this important multi-objective problem. 
The general problem consists of finding $n$, the number of chunks,
as well as the speeds for the execution and the re-execution of each chunk,
both for soft and hard deadlines. We identify and discuss two important sub-cases
that help tackling the most general problem instance: (i)  a single chunk  (the task is atomic);
and (ii) re-execution speed is always identical to the first execution speed.
The second contribution is a comprehensive study of all 
problem instances; for each instance, we propose either an exact solution, or a function
that can be optimized numerically. 
\RRme{
We also analytically prove the accuracy of our model that enforces a single re-execution
per chunk.}{}
We then compare the different models
through an extensive set of experiments. We compare the optimal energy consumption
under various models with a set of different parameters. 
It turns out that when $\lambda$ is small, it is sufficient to restrict the study to a single chunk, 
while when $\lambda$ increases, it is better to use multiple chunks and different
re-execution speeds. 

The rest of the paper is organized as follows.
First we discuss related work in Section~\ref{sec.related}. The model and the optimization problems 
are formalized in Section~\ref{sec.framework}. \RRme{We discuss the accuracy of the model in 
Section~\ref{sec.compare}.}{} We first focus in Section~\ref{sec.singletask} on the simpler case 
of an atomic task, i.e., with a single chunk. 
The general problem with multiple chunks,
where we need to decide for the number of chunks and their sizes, is discussed in Section~\ref{sec.multiple}. 
In Section~\ref{sec.experiments}, we report several experiments to assess the differences between the models, 
and the relative gain due to chunking or to using different speeds for execution and re-execution.
Finally, we provide some concluding remarks and future research directions in Section~\ref{sec.conclusion}.

\section{Related work}
\label{sec.related}

Dynamic power management through voltage/fre\-quen\-cy scaling~\cite{Weiser94schedulingfor} 
utilizes the slack in a given computation to reduce energy consumption 
while checkpointing.The authors of~\cite{Chandy:1985, 9847} utilize that slack to improve the reliability 
of the computation. Hence, it is natural to explore the interplay of power management 
and fault tolerance~\cite{ Melhem03CP}, when both techniques result in delaying 
the completion time of tasks, thus resulting in a tradeoff between power consumption, 
reliability and performance. This tri-criteria optimization problem has been explored 
by many researchers, especially in real-time and embedded systems where the 
completion time of a task is as important as the reliability of its result.

The power/reliability/performance tradeoff has been explored from many different angles. 
In~\cite{ Zhang03CP}, an adaptive scheme is presented to place checkpoints 
based on the expected frequency of faults and is combined with dynamic speed scaling depending 
on the actual occurrence of faults. Similarly, in~\cite{ Melhem03CP}, the placement of 
checkpoints is chosen in a way that minimizes the total energy consumption assuming 
that the slack reserved for rollback recovery is used for speed scaling if faults do not occur. 
In~\cite{Zhu04EEM}, the effect of frequency scaling on the fault rate was considered and 
incorporated into the optimization problem. In~\cite{ Zhu06}, the study of the tri-criteria 
optimization was extended to the case of multiple tasks executing on the same processor. 
In~\cite{pop-poulsen}, a constraint logic programming-based approach is presented 
to decide for the voltage levels, the start times of processes and the transmission times 
of messages, in such a way that transient faults are tolerated, timing constraints are satisfied 
and energy is minimized. 

Recently, off-line scheduling heuristics that consider the three criteria were presented 
for systems where active replication, rather than fault recovery, is used to enhance 
reliability~\cite{ Assayad11}.  Selective re-execution of some tasks were considered 
in~\cite{DBLP:journals/corr/abs-1111-5528} to achieve a given level of reliability while 
minimizing energy, when tasks graphs are scheduled on multiprocessors with hard deadlines. 
Approximation algorithms for particular types of task graphs were presented to efficiently 
solve the same problem in~\cite{aupy:hal-00742754}.

In this work, we consider two types of deadlines that are commonly used for real-time tasks; 
hard and soft deadlines. In hard real-time systems~\cite{Stankovic:1998:DSR:552538}, 
deadlines should be strictly met and any computation that does not meet its deadline 
is not useful to the system. These systems are built to cope with worst-case scenarios, 
especially in critical applications where catastrophic consequences may result from missing 
deadlines. Soft real-time systems~\cite{buttazzo2005} are more flexible and are designed 
to adapt to system changes that may prevent the meeting of the deadline. They are suited 
to novel applications such as multimedia and interactive systems. In these systems, it is 
desired to reduce the expected completion time rather than to meet hard deadlines.

\section{Framework}
\label{sec.framework}

Given a workload~$W$, the problem is to divide $W$ into a number of
chunks and to decide at which speed each chunk is executed. In case
of a transient failure during the execution of one chunk, this chunk is re-executed, possibly at
a different speed. 
\RRme{We formalize the model in Section~\ref{sec.model}, and then different
variants of the optimization problem are defined in Section~\ref{sec.pbs}.  
\RRme{Table~\ref{tab.notations} summarizes the main notations.}
{}

\begin{table}[htdp]
\begin{center}
\begin{tabular}{|l|l|}
\hline
$W$ & total amount of work\\
\spe & processor speed for first execution\\
\spr &  processor speed for re-execution\\
\tc & checkpointing time\\
\ec & energy spent for checkpointing\\
\hline
\end{tabular}
\end{center}
\caption{List of main notations.}
\label{tab.notations}
\end{table}
}{}

\RRme{
\subsection{Model}
\label{sec.model}}{
\smallskip  
{\bf Model. } }Consider first the case of a single chunk (or atomic task) of size~$W$, 
denoted as \singlec. 
We execute this chunk on a processor that can run at several speeds.  We assume
continuous speeds, i.e., the speed of execution can take an
arbitrary positive real value.  The execution is subject
to failure, and resilience is provided through the use of
checkpointing.  The overhead induced by checkpointing is twofold:
execution time \tc, and energy consumption~\ec.

We assume that failures strike with uniform distribution, hence the
probability that a failure occurs during an execution is linearly
proportional to the length of this execution. Consider the first
execution of a task of size $W$ executed at speed $\spe$: the
execution time is $\tex = W/\spe +\tc$, hence the failure probability
is $\pf=\lambda \tex = \lambda(W/\spe +\tc)$, where $\lambda$ is the
instantaneous failure rate.  If there is indeed a failure, we
re-execute the task at speed $\spr$ (which may or may not differ from
$\spe$); the re-execution time is then $\treex = W/\spr +\tc$ so that
the expected execution time is
\begin{align}
\!\!\!\!\! \ex{\tps} & \!=\! \tex + \pf  \treex \nonumber\\
          &\!=\! (W/\spe +\tc) + \lambda (W/\spe +\tc) (W/\spr +\tc) \; .
\label{eq.exptime}
\end{align}
Similarly, the worst-case execution time is
\begin{align}
\wtps &= \tex +  \treex \nonumber\\
          &= (W/\spe +\tc) +  (W/\spr +\tc) \; .
\label{eq.exptimewc}
\end{align}

Remember that we assume success after re-execution, so we do not
account for second and more re-executions. Along the same line, we could
spare the checkpoint after re-executing the last task in a series of
tasks, but this unduly complicates the analysis.
\RRme{In Section~\ref{sec.compare},}
{In the companion research report~\cite{rr-abmrr},} we show that this model with only
a single re-execution is accurate up to second order terms when
compared to the model with an arbitrary number of failures that follows
an Exponential distribution of parameter~$\lambda$.

What is the expected energy
consumed during execution?  The energy consumed during the first
execution at speed $\spe$ is $W \spe ^2 + \ec $, where $\ec$ is the energy
consumed during a checkpoint. The energy
consumed during the second execution at speed $\spr$ is $W \spr ^2 +
\ec $, and this execution takes place with probability $\pf=\lambda
\tex = \lambda(W/\spe +\tc)$, as before.  Hence the expectation of the
energy consumed is
\begin{align}
\!\!\!\ex{\energy} &\!=\! (W \spe^2 \!+\! \ec) \!+\! \lambda \left (W/\spe \!+\! \tc
\right ) \left( W \spr^2 \!+\! \ec \right).
\label{eq.expenergy}
\end{align}

With multiple chunks (\multic model), 
the execution times (worst case or expected) are the sum of the execution times for each chunk, and the expected energy is the sum of the expected energy for each chunk
(by linearity of expectations).
\RRme{

}{} 
We point out that the failure model is coherent with respect to chunking. Indeed, assume
that a divisible task of weight $W$ is split into two chunks of weights $w_{1}$ and
$w_{2}$ (where $w_{1} + w_{2} =W$). Then
the probability of failure for the first chunk is $\pfun =  \lambda (w_{1}/\spe +\tc)$
and that for the second chunk is $\pfdeux =  \lambda (w_{2}/\spe +\tc)$. 
The probability of failure  $\pf =  \lambda (W/\spe +\tc)$ with a single  chunk differs from the
probability of failure with two chunks only because of the
extra checkpoint that is taken; if $\tc = 0$, they coincide exactly. 
If $\tc > 0$, there is an additional risk to use two chunks, because the execution lasts longer by
a duration $\tc$. Of course this is the price to pay for a shorter re-execution time in case of failure:
Equation~\eqref{eq.exptime} shows that the expected re-execution time is 
$\pf  \treex$, which is quadratic in $W$. There is a trade-off between having many small chunks
(many $\tc$ to pay, but small re-execution cost) and a few larger chunks (fewer $\tc$, but increased re-execution cost).

\RRme{
\subsection{Optimization problems}
\label{sec.pbs}

}{
\smallskip
{\bf Optimization problems.}}
\RRme{The optimization problem 
is stated as follows:
g}{G}iven a deadline \dl and a divisible task whose total
computational load is $W$, the problem is to partition the task into $n$
chunks of size $w_{i}$, where $\sum_{i=1}^{n} w_{i} = W$, and choose
for each chunk an execution speed $\spe_{i}$ and a re-execution speed~$\spr_{i}$ 
in order to minimize the expected energy consumption:
\begin{equation*}
\ex{\energy} = \sum_{i=1}^{n}(w_{i} \spe_{i}^2 + \ec) + \lambda \left
  (\frac{w_{i}}{\spe_{i}} + \tc \right ) \left( w_{i} \spr_{i}^2 + \ec
\right),
\end{equation*}
subject to the constraint that the deadline is met either in expectation or in the worst case:
\RRme{
\begin{equation*}
\begin{array}{ll}
\ed & \ex{\tps} = \sum_{i=1}^{n}\left(\frac{w_{i}}{\spe_{i}} + \tc + \lambda
\left (\frac{w_{i}}{\spe_{i}} + \tc \right ) \left(
  \frac{w_{i}}{\spr_{i}}+ \tc \right) \right) \leq \dl\\
 \hd & \wtps = \sum_{i=1}^{n}\left(\frac{w_{i}}{\spe_{i}} + \tc + 
  \frac{w_{i}}{\spr_{i}}+ \tc \right) \leq \dl
\end{array}
\end{equation*}
}{
\begin{equation*}
\begin{array}{l}
\ed \RRme{}{\text{  (Expected deadline): }} \\ \ex{\tps} \!=\! \sum_{i=1}^{n}\!\left(\frac{w_{i}}{\spe_{i}} \!+\! \tc \!+\! \lambda
\left (\frac{w_{i}}{\spe_{i}} \!+\! \tc \right )\! \left(
  \frac{w_{i}}{\spr_{i}}\!+\! \tc \right) \right) \!\leq\! \dl\\[.3cm]
 \hd \RRme{}{ \text{ (Hard deadline): }}\\ \wtps = \sum_{i=1}^{n}\left(\frac{w_{i}}{\spe_{i}} + \tc + 
  \frac{w_{i}}{\spr_{i}}+ \tc \right) \leq \dl
\end{array}
\end{equation*}
}
The unknowns are the number of chunks~$n$, the sizes of these
chunks~$w_{i}$, the speeds for the first execution~$\spe_{i}$ and 
\RRme{the speeds }{} for the second execution~$\spr_{i}$.  
We consider two variants of the problem, depending upon re-execution speeds:
\begin{itemize}
\item \singles\RRme{}{(Single speed)}: in this simpler variant, the re-execution speed is
  always the same as the speed chosen for the first execution.  We
  then have to determine a single speed for each chunk: $\spr_{i}
= \spe_{i}$ for all $i$. 
\item \multis\RRme{}{(Multiple speeds)}: in this more general variant, the re-execution speed is
  freely chosen, and there are two different speeds to determine for
  each chunk.
\end{itemize}

We also consider the variant with a single chunk (\singlec), i.e., the task is atomic
and we only need to decide for its execution speed (in the \singles model),
or for its execution and re-execution speeds (in the \multis model). We
start the study in Section~\ref{sec.singletask} with this simpler problem.

 \RRme{
\section{Accuracy of the model}
\label{sec.compare}

In this section, we discuss the accuracy of this model, which accounts for a single re-execution.
We compare the expressions of the expected deadline and energy (in Equations~\eqref{eq.exptime}
and~\eqref{eq.expenergy}) to those obtained when
adopting the more advanced model where an arbitrary number of
Exponentially distributed failures can strike during execution and
re-execution. We only deal with soft deadlines here, because no hard deadline can be enforced
for the model with Exponentially distributed failures (the execution time of a chunk can be arbitrarily
large, although such an event has low probability to occur).

Assume that failures are distributed using an Exponential distribution
of parameter $\lambda$: the probability of failure during a time
interval of length $t$ is $\pf = 1 - e^{-\lambda t}$.  Consider a
single task of size $W$ that we first execute at speed $\spe$. If we
detect a transient failure at the end of the execution, we re-execute
the task until success, using speed $\spr$ at each of these new
attempts.  To the best of our knowledge, the expressions for $\ex{\tps}$
and $\ex{\energy}$ are unknown for this model, and we establish them below:

\begin{proposition}
\label{prop.indiv.exp}
With an arbitrary number of Exponentially distributed failures and one
single task of size~$W$, 
\RRme{
\begin{align}
 \label{eq.timeexpo}
\ex{\tps} &= W / \spe + \tc  \quad + e^{ \lambda (W/\spr + \tc)} \left(1  -e^{-\lambda (W/\spe +\tc)} \right)
\left( W/\spr + \tc \right)\\
 \label{eq.energyexpo}
\ex{\energy} &= W \spe^2 + \ec \quad + e^{ \lambda (W/\spr + \tc)} \left(1  -e^{-\lambda (W/\spe +\tc)} \right)
\left( W \spr^2 + \ec \right)
\end{align}
}{
\begin{align}
 \label{eq.timeexpo}
\ex{\tps} &= W / \spe + \tc  \\
& \quad + e^{ \lambda (W/\spr + \tc)} \left(1  -e^{-\lambda (W/\spe +\tc)} \right)
\left( W/\spr + \tc
 \right) \nonumber\\
 \label{eq.energyexpo}
\ex{\energy} &= W \spe^2 + \ec  \\
& \quad + e^{ \lambda (W/\spr + \tc)} \left(1  -e^{-\lambda (W/\spe +\tc)} \right)
\left( W \spr^2 + \ec \right)\nonumber
\end{align}
}
\end{proposition}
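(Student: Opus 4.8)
The plan is to condition on the number of re-execution attempts and sum a geometric-type series. First I would set up notation: let $p = \lambda(W/\spe + \tc)$ be (an approximation ignored below — actually) the probability that the \emph{first} execution fails; with the Exponential model this is $p_0 = 1 - e^{-\lambda(W/\spe+\tc)}$, and let $q = 1 - e^{-\lambda(W/\spr+\tc)}$ be the probability that any given \emph{re-execution} fails, since a re-execution has duration $W/\spr+\tc$. The key observation is that the first attempt and all subsequent attempts are independent, and the subsequent attempts are i.i.d. Bernoulli with failure probability $q$.

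Next I would enumerate the scenarios. With probability $1-p_0$ the first execution succeeds: time $W/\spe+\tc$, energy $W\spe^2+\ec$. With probability $p_0$ we enter the re-execution phase, and then the number $k\ge 1$ of re-executions needed is geometric: $\pro{k \text{ re-executions}} = q^{k-1}(1-q)$. In that case the total time is $(W/\spe+\tc) + k(W/\spr+\tc)$ and the total energy is $(W\spe^2+\ec) + k(W\spr^2+\ec)$. Therefore
\begin{align*}
\ex{\tps} &= (W/\spe+\tc) + p_0 \sum_{k=1}^{\infty} q^{k-1}(1-q)\, k\, (W/\spr+\tc).
\end{align*}
The inner sum is the expectation of a geometric random variable, $\sum_{k\ge1} k q^{k-1}(1-q) = 1/(1-q)$. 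Substituting $p_0 = 1 - e^{-\lambda(W/\spe+\tc)}$ and $1-q = e^{-\lambda(W/\spr+\tc)}$, so $1/(1-q) = e^{\lambda(W/\spr+\tc)}$, yields exactly \eqref{eq.timeexpo}. The energy computation \eqref{eq.energyexpo} is word-for-word identical with $W/\spr+\tc$ replaced by $W\spr^2+\ec$ and $W/\spe+\tc$ replaced by $W\spe^2+\ec$ in the leading term, since the per-attempt costs are the only thing that changes and expectation is linear.

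The only real subtlety — and the step I would be most careful about — is justifying that the failure probability of a \emph{single} attempt of duration $t$ is $1-e^{-\lambda t}$ and, crucially, that successive attempts are independent with the \emph{same} per-attempt probability (memorylessness of the Exponential, together with the assumption that a detected failure triggers a full restart of that attempt rather than a resume). Once that independence is granted, everything reduces to the geometric-series manipulation above; there is no analytic obstacle, only the modeling bookkeeping of keeping the asymmetry between the first attempt (speed $\spe$) and all re-attempts (speed $\spr$) straight. I would also remark that setting $\spr=\spe$ recovers the familiar single-speed formula, and that a first-order Taylor expansion in $\lambda$ of \eqref{eq.timeexpo} and \eqref{eq.energyexpo} recovers \eqref{eq.exptime} and \eqref{eq.expenergy}, which is the sanity check the accuracy discussion will rely on.
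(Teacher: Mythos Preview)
Your proof is correct and follows essentially the same approach as the paper: both rely on the memorylessness of the Exponential to make successive attempts independent, and both compute the expected re-execution cost as $(W/\spr+\tc)/(1-q)=e^{\lambda(W/\spr+\tc)}(W/\spr+\tc)$. The only cosmetic difference is that the paper obtains this factor via the one-line renewal equation $\ex{\treex}=(W/\spr+\tc)+q\,\ex{\treex}$, whereas you unroll it into an explicit geometric sum; the content is identical.
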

\begin{proof}
  With an Exponential distribution, \refequ{exptime} can be rewritten as
  $\ex{\tps} = \tex
  + \pf \ex{\treex}$,
  where $\tex = W/s + \tc$ and $\pf = 1 - e^{-
    \lambda (W/s + \tc)}$. Since all re-executions are done at speed
  $\sigma$, the
  expectation of the re-execution time obeys the following equation:
  $$\ex{\treex} = (W/\sigma + \tc) + \left( 1- e^{-\lambda(W/\sigma +
      \tc)}\right) \ex{\treex}$$
   We use the memoryless property of the Exponential distribution here: after a failure, the
   expectation of the time to re-execute the task is exactly the same as before the failure
        This leads to $\ex{\treex} =
  e^{\lambda(W/\sigma + \tc)}(W/\sigma + \tc)$. Reporting in the first
  equation, we end up with \refequ{timeexpo}.
  The expression of the expected energy consumption
  (\refequ{energyexpo}) is derived using the same line of reasoning.
\end{proof}


\begin{proposition}
\label{prop.accurate}
With an arbitrary number of Exponentially distributed failures and one
single task of size~$W$, when $\lambda \rightarrow 0$, 
\begin{equation}
\label{eq.devtime}
\ex{\tps} = (W/\spe + \tc) + \lambda (W/\spe + \tc)(W/\sigma + \tc) + O\left( \lambda^2 \right)
\end{equation}
\begin{equation}
\label{eq.devenergy}
\ex{\energy} = (W \spe^2 + \ec) + \lambda (W/\spe + \tc)(W \sigma^2 + \ec) + O\left( \lambda^2 \right)
\end{equation}
\end{proposition}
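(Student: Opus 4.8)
The plan is to start directly from the closed-form expressions \eqref{eq.timeexpo} and \eqref{eq.energyexpo} established in Proposition~\ref{prop.indiv.exp}, and to Taylor-expand their right-hand sides around $\lambda = 0$. To lighten notation I would set $a = W/\spe + \tc$ and $b = W/\spr + \tc$, which are fixed positive reals independent of $\lambda$; then \eqref{eq.timeexpo} reads $\ex{\tps} = a + e^{\lambda b}\left(1 - e^{-\lambda a}\right) b$, and \eqref{eq.energyexpo} has exactly the same shape with the trailing constant $b$ replaced by $W\spr^2 + \ec$ and the leading $a$ replaced by $W\spe^2 + \ec$.

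The key step is the expansion of the two exponential factors. Using the second-order Taylor formula with Lagrange remainder, $e^{\lambda b} = 1 + \lambda b + O(\lambda^2)$ and $1 - e^{-\lambda a} = \lambda a + O(\lambda^2)$ as $\lambda \to 0$, where the constants hidden in the $O(\cdot)$ depend only on the fixed quantities $a$ and $b$. Multiplying these two expansions gives $e^{\lambda b}\left(1 - e^{-\lambda a}\right) = \lambda a + O(\lambda^2)$, and multiplying once more by the constant $b$ yields $\ex{\tps} = a + \lambda a b + O(\lambda^2)$. Substituting back $a = W/\spe + \tc$ and $b = W/\spr + \tc$ is precisely \eqref{eq.devtime}. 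The energy bound \eqref{eq.devenergy} follows by the identical computation: the exponential prefactor $e^{\lambda b}(1 - e^{-\lambda a})$ is unchanged, only the constant factor it multiplies differs, so $\ex{\energy} = (W\spe^2 + \ec) + \lambda(W/\spe + \tc)(W\spr^2 + \ec) + O(\lambda^2)$.

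I do not expect any real obstacle here; the one point that deserves a line of care is to justify that the remainder is genuinely $O(\lambda^2)$ rather than merely $o(\lambda)$, which is obtained by carrying the Taylor expansions of the exponentials to second order (not just using differentiability at $0$) and by noting that $a$ and $b$ are constants, so the big-$O$ constants are uniform. It is also worth remarking, as the motivation of the section, that comparing \eqref{eq.devtime}--\eqref{eq.devenergy} with the single-re-execution expressions \eqref{eq.exptime} and \eqref{eq.expenergy} shows the two models coincide up to second-order terms in $\lambda$, which is exactly the claimed accuracy.
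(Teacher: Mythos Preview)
Your proposal is correct and follows essentially the same approach as the paper: both start from the closed forms of Proposition~\ref{prop.indiv.exp} and Taylor-expand the two exponential factors to obtain $e^{\lambda b}(1-e^{-\lambda a}) = \lambda a + O(\lambda^2)$, then read off the result. Your introduction of the shorthand $a,b$ and the explicit remark that second-order Taylor (rather than mere differentiability) is needed to get $O(\lambda^2)$ instead of $o(\lambda)$ are tidier than the paper's write-up, but the argument is the same.
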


\begin{proof}
The first-order Taylor expansion of $x \mapsto e^x$ around $0$ gives:
\RRme{
\begin{align*}
\ex{\tps} \underset{\lambda \rightarrow 0}{=} (W/\spe + \tc)
+& \left( 1 + \lambda (W/\spe + \tc) + O \left( \lambda^2 (W/\spe + \tc)^2 \right) \right)\\
&\times \left( \lambda (W/\sigma + \tc) + O \left( \lambda^2 (W/\sigma + \tc)^2 \right) \right)
\left( W/\sigma + \tc \right)
\end{align*}
}{
\begin{multline*}
\ex{\tps} \underset{\lambda \rightarrow 0}{=} (W/\spe + \tc)
+ \left( 1 + \lambda (W/\spe + \tc) + O \left( \lambda^2 (W/\spe + \tc)^2 \right) \right)\\
\times \left( \lambda (W/\sigma + \tc) + O \left( \lambda^2 (W/\sigma + \tc)^2 \right) \right)
\left( W/\sigma + \tc \right)
\end{multline*}
}
Hence,
\begin{align*}
\ex{\tps} &\underset{\lambda \rightarrow 0}{=} (W/\spe + \tc)
+\left( \lambda (W/\spe + \tc) + O \left( \lambda^2 \right) \right)
\left( W/\sigma + \tc \right)\\
\ex{\tps} &\underset{\lambda \rightarrow 0}{=} (W/\spe + \tc) + \lambda (W/\spe + \tc)(W/\sigma + \tc) + O\left( \lambda^2 \right)
\end{align*}
Again, the energy formula is built using the same rationale.
\end{proof}

As a consequence of Proposition~\ref{prop.accurate}, the formulas that we
consider with one single re-execution (Equations~\eqref{eq.exptime}
and~\eqref{eq.expenergy}) are accurate up to second order
terms when compared to the model with an arbitrary number of Exponential failures.  
Note that this result is not obvious, because we drop a
potentially arbitrarily large number of re-executions in the linear
model with at most one re-execution. 
Furthermore, the result extends naturally when considering a divisible task
and \multic, since the result holds for each chunk, and by summation,
one single re-execution of each chunk is accurate up to second order terms. 
}{}

\section{With a single chunk}
\label{sec.singletask}

In this section, we consider the case of a single chunk, or equivalently of an atomic task:
given a non-divisible workload~$W$ and a deadline
\dl, find the values of $\spe$ and $\spr$ that minimize 
$$\ex{\energy} =
(W \spe^2 + \ec) + \lambda \left (\frac{W}{\spe} + \tc \right) \left(
  W \spr^2 + \ec \right)$$
   subject to 
   $$\ex{\tps} = \left(
  \frac{W}{\spe} + \tc \right) + \lambda \left (\frac{W}{\spe} + \tc
\right) \left( \frac{W}{\spr} + \tc \right) \leq \dl$$ in the \ed model, and subject to
 $$ \frac{W}{\spe} + \tc + \frac{W}{\spr} + \tc \leq \dl$$ in the \hd model.
We first deal
with the \singles model, where we enforce $\spr = \spe$, before
moving on to the \multis model.

\subsection{Single speed model}
\label{sec.singles-one}
In this section, we express $\ex{\energy}$ as functions of the speed \spe. That is, 
$\ex{\energy}(\spe) = (W \spe^2 + \ec)(1 + \lambda (W/\spe + \tc))$.
The following result is valid for both \ed and \hd models.

\RRme{
\begin{lemma}
	\label{g.singles}
$\ex{\energy}$  is convex on $\mathbb{R^{\star}_+}$. It admits a unique minimum
\RRme{
\begin{equation}
	\label{eq.spestar}
\spe^{\star} = \frac{\lambda W}{6(1+ \lambda \tc)} \left(\frac{-(3 \sqrt{3} \sqrt{27 a^2-4 a}-27 a+2)^{1/3}}{2^{1/3}} -\frac{2^{1/3}}{(3 \sqrt{3} \sqrt{27 a^2-4 a}-27 a+2)^{1/3}}-1 \right )
\end{equation}
}{
\begin{align}
	\label{eq.spestar}
\spe^{\star} &= \frac{\lambda W}{6(1+ \lambda \tc)} \left(\frac{-(3 \sqrt{3} \sqrt{27 a^2-4 a}-27 a+2)^{1/3}}{2^{1/3}} \right . \nonumber \\
& \quad \left . -\frac{2^{1/3}}{(3 \sqrt{3} \sqrt{27 a^2-4 a}-27 a+2)^{1/3}}-1 \right )
\end{align}}
where $a=\lambda \ec \left (\frac{2(1+ \lambda \tc)}{\lambda W}\right )^2$.
\end{lemma}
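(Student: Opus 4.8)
The function to minimize, $\ex{\energy}(\spe)=(W\spe^{2}+\ec)(1+\lambda\tc+\lambda W/\spe)$, is the same under the \ed and \hd models --- only the (not-yet-imposed) deadline constraint differs --- so it suffices to study it as a one-variable function on $\mathbb{R}^{\star}_{+}$. First I would expand it into a Laurent polynomial,
\[
\ex{\energy}(\spe)=(1+\lambda\tc)W\spe^{2}+\lambda W^{2}\spe+(1+\lambda\tc)\ec+\lambda\ec W\,\spe^{-1},
\]
so that
\[
\ex{\energy}''(\spe)=2(1+\lambda\tc)W+2\lambda\ec W\,\spe^{-3}>0\qquad\text{for all }\spe>0,
\]
all parameters being nonnegative with $W,\ec>0$. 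Hence $\ex{\energy}$ is strictly convex on $\mathbb{R}^{\star}_{+}$, and since it tends to $+\infty$ both as $\spe\to0^{+}$ and as $\spe\to+\infty$, it has a unique global minimizer, characterized by $\ex{\energy}'(\spe)=0$.

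Clearing denominators (licit for $\spe>0$) in $\ex{\energy}'(\spe)=2(1+\lambda\tc)W\spe+\lambda W^{2}-\lambda\ec W\,\spe^{-2}=0$ turns the stationarity condition, with no spurious root, into the cubic
\[
2(1+\lambda\tc)\,\spe^{3}+\lambda W\,\spe^{2}-\lambda\ec=0,
\]
whose coefficient string $(+,+,0,-)$ shows a single sign change, so by Descartes' rule it has exactly one positive real root, consistent with the uniqueness just established. To solve it I would apply the combined rescaling-and-shift $\spe=\frac{\lambda W}{6(1+\lambda\tc)}(z-1)$, tailored to kill the quadratic term: using $(z-1)^{3}+3(z-1)^{2}=z^{3}-3z+2$, the cubic collapses to the depressed form $z^{3}-3z+q=0$, where $q=2-27a$ after substituting the value of $a$, so that $q^{2}-4=27(27a^{2}-4a)$. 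Cardano's method then gives $z=\alpha+\beta$ with $\alpha^{3},\beta^{3}$ the roots of the resolvent quadratic $\xi^{2}+q\xi+1=0$, i.e. $\xi=\tfrac12\bigl(-q\pm\sqrt{q^{2}-4}\bigr)=\tfrac12\bigl(-q\pm3\sqrt3\sqrt{27a^{2}-4a}\bigr)$ and $\alpha\beta=1$. Writing $C=3\sqrt3\sqrt{27a^{2}-4a}-27a+2$, one branch yields $\alpha^{3}=-C/2$ and $\beta^{3}=-2/C$, i.e. $\alpha=-C^{1/3}/2^{1/3}$ and $\beta=-2^{1/3}/C^{1/3}$; plugging $z-1=\alpha+\beta-1$ back into $\spe=\frac{\lambda W}{6(1+\lambda\tc)}(z-1)$ reproduces exactly \eqref{eq.spestar}.

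The whole difficulty is concentrated in this last step: the bookkeeping of the nested radicals, the identity $\sqrt{q^{2}-4}=3\sqrt3\sqrt{27a^{2}-4a}$, and --- the truly delicate point --- justifying the choice of cube roots so that $\alpha+\beta$ is real and equals precisely the unique positive root of the original cubic. When $27a^{2}-4a\ge0$ the cubic has a single real root and the real cube roots $\alpha,\beta$ with $\alpha\beta=1$ do the job directly; when $27a^{2}-4a<0$ one is in the \emph{casus irreducibilis}, the cube roots must be taken as a complex-conjugate pair, and one has to check that their sum is real and still selects the positive root. Everything else is elementary calculus.
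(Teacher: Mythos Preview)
Your proof is correct and follows essentially the same approach as the paper: establish strict convexity via the second derivative, deduce a unique minimizer, and reduce the first-order condition to a cubic (the paper writes it as $X^{3}+X^{2}-a=0$ with $X=\tfrac{2(1+\lambda\tc)}{\lambda W}\spe$, which is your substitution up to the affine change $X=(z-1)/3$). The only difference is that the paper stops at the cubic and defers to a computer algebra system for the closed form, whereas you carry out Cardano's method by hand and, usefully, flag the \emph{casus irreducibilis} issue of selecting the cube-root branch that yields the unique positive real root---a point the paper does not discuss.
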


\begin{proof}
Let us prove that $g(\spe)=\ex{\energy}(s)$ is convex and admits a unique minimum:
we have $g'(\spe)=\spe(2W (1+ \lambda \tc)) + \lambda W^2 - \frac{\lambda W\ec}{\spe^2}$, 
$g''(\spe)=(2W (1+ \lambda \tc))+ \frac{2\lambda W\ec}{\spe^3}>0$. This function is strictly 
convex in $\mathbb{R^{\star}_+}$, and $g'  \underset{0^+}{\rightarrow} -\infty$, 
$g' \underset{\infty}{\rightarrow} \infty$  thus there exist a unique minimum.

Let us find the minimum.  
For $\spe>0$, we have:
\RRme{
\begin{align*}
g'(\spe)=0 &\Leftrightarrow \left (\frac{2(1+ \lambda \tc)}{\lambda W}\right )^3\spe^3 + \left (\frac{2(1+ \lambda \tc)}{\lambda W}\right )^2 \spe^2  - \lambda \ec \left (\frac{2(1+ \lambda \tc)}{\lambda W}\right )^2 =0\\
&\Leftrightarrow X^3 + X^2 - \lambda \ec \left (\frac{2(1+ \lambda \tc)}{\lambda W}\right )^2 =0 \quad \text{ where } X=\frac{2(1+ \lambda \tc)}{\lambda W}\spe
\end{align*}
}{
\begin{align*}
g'(\spe)=0 &\Leftrightarrow \left (\frac{2(1+ \lambda \tc)}{\lambda W}\right )^3\spe^3 + \left (\frac{2(1+ \lambda \tc)}{\lambda W}\right )^2 \spe^2 \\
&\quad \quad - \lambda \ec \left (\frac{2(1+ \lambda \tc)}{\lambda W}\right )^2 =0\\
&\Leftrightarrow X^3 + X^2 - \lambda \ec \left (\frac{2(1+ \lambda \tc)}{\lambda W}\right )^2 =0 \\
& \quad \quad \text{ where } X=\frac{2(1+ \lambda \tc)}{\lambda W}\spe
\end{align*}}
Using a computer algebra software, it is easy to show that the minimum is obtained at the value $\spe = \spe^{\star}$
given by Equation~\ref{eq.spestar}.
\end{proof}
}{
\begin{lemma}
	\label{g.singles}
$\ex{\energy}$  is convex on $\mathbb{R^{\star}_+}$. It admits a unique minimum 
$\spe^{\star}$ which can be computed numerically.
\end{lemma}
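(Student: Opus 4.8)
The plan is to analyze $g(\spe) := \ex{\energy}(\spe) = (W \spe^2 + \ec)(1 + \lambda(W/\spe + \tc))$ as a smooth function on $\mathbb{R}^{\star}_+$ by differentiating twice. First I would expand $g$ to obtain $g(\spe) = W(1+\lambda\tc)\spe^2 + \lambda W^2 \spe + \ec(1+\lambda\tc) + \lambda W \ec / \spe$, which makes the derivatives immediate: $g'(\spe) = 2W(1+\lambda\tc)\spe + \lambda W^2 - \lambda W \ec/\spe^2$ and $g''(\spe) = 2W(1+\lambda\tc) + 2\lambda W \ec/\spe^3$. Since $W, \ec, \tc, \lambda$ are all nonnegative (and $W>0$), every term in $g''$ is positive on $\mathbb{R}^{\star}_+$, so $g$ is strictly convex there.

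Next I would establish existence and uniqueness of the minimizer by a boundary analysis of $g'$. As $\spe \to 0^+$, the term $-\lambda W \ec/\spe^2$ dominates and $g'(\spe) \to -\infty$; as $\spe \to +\infty$, the term $2W(1+\lambda\tc)\spe$ dominates and $g'(\spe) \to +\infty$. By the intermediate value theorem $g'$ vanishes somewhere, and by strict convexity (equivalently, $g'$ strictly increasing) this zero is unique; it is the global minimum $\spe^{\star}$. This is exactly the argument already carried out for the stronger (closed-form) version of the lemma, so nothing new is needed beyond transcribing it.

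For the final clause, that $\spe^{\star}$ can be computed numerically, I would simply note that $g'(\spe) = 0$ is equivalent (after multiplying by $\spe^2 > 0$) to the cubic $2W(1+\lambda\tc)\spe^3 + \lambda W^2 \spe^2 - \lambda W \ec = 0$, which has a unique positive real root by the sign analysis above; any standard root-finding routine (bisection on $g'$, or a cubic solver) locates it to arbitrary precision. There is essentially no obstacle here: the whole statement is routine calculus once $g$ is written in expanded form. The only mild care needed is to confirm the sign of each coefficient (using $1 + \lambda\tc > 0$ and $\ec, W, \lambda \geq 0$) so that the convexity and the limit claims are genuinely unconditional on the parameter regime.
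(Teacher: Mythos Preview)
Your proposal is correct and follows essentially the same route as the paper: compute $g'$ and $g''$, observe $g''>0$ on $\mathbb{R}^{\star}_+$ for strict convexity, use the limits $g'\to-\infty$ at $0^+$ and $g'\to+\infty$ at $+\infty$ to obtain a unique critical point, and note that $g'(\spe)=0$ reduces to a cubic that can be solved numerically (the paper goes one step further and writes the closed-form root via a computer algebra system). The only minor slip is that your ``nonnegative'' hypothesis on $\lambda$ and $\ec$ should be ``positive'' for the limit $g'\to-\infty$ to hold; this is implicit in the paper's setting.
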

Due to lack of space, the proof is available in the companion research report~\cite{rr-abmrr}.
}

\medskip
\subsubsection{Expected deadline}
In the \singles \ed model, we denote $\ex{\tps}(\spe) = (W/\spe + \tc)(1 + \lambda (W/\spe + \tc))$ 
the constraint on the execution time.

\begin{lemma}
	\label{f.singles}
For any $\dl$, if $\tc + \lambda \tc^2\geq \dl$, then there is no solution. Otherwise, the constraint 
on the execution time can be rewritten as
$\spe \in \left [W\dfrac{1+2\lambda \tc + \sqrt{4 \lambda \dl +1}}{2(\dl -\tc(1+\lambda \tc))}, +\infty \right ($.
\end{lemma}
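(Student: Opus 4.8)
The plan is to treat the constraint $\ex{\tps}(\spe) \leq \dl$ directly as an inequality in the single variable $\spe$. First I would write $t = W/\spe + \tc$, so that $\ex{\tps}(\spe) = t(1+\lambda t) = t + \lambda t^2$, which is an increasing function of $t$ for $t > 0$; hence the constraint $\ex{\tps}(\spe) \leq \dl$ is equivalent to $t \leq t_0$ where $t_0$ is the unique positive root of $\lambda t^2 + t - \dl = 0$, namely $t_0 = \dfrac{-1 + \sqrt{1 + 4\lambda \dl}}{2\lambda}$. Before going further I would handle the feasibility condition: since $t = W/\spe + \tc > \tc$, a feasible $\spe$ exists if and only if $\tc < t_0$, i.e. $\lambda \tc^2 + \tc - \dl < 0$, i.e. $\tc + \lambda \tc^2 < \dl$; this is exactly the dichotomy in the statement, with no solution when $\tc + \lambda \tc^2 \geq \dl$.

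Next, assuming feasibility, I would convert $t \leq t_0$ back into a condition on $\spe$: since $t = W/\spe + \tc$ is decreasing in $\spe$, the inequality $W/\spe + \tc \leq t_0$ is equivalent to $\spe \geq \dfrac{W}{t_0 - \tc}$. So the admissible set is a half-line $[\spe_{\min}, +\infty)$ with $\spe_{\min} = \dfrac{W}{t_0 - \tc}$. The remaining work is purely algebraic: substitute $t_0 = \dfrac{-1 + \sqrt{1+4\lambda\dl}}{2\lambda}$, compute $t_0 - \tc = \dfrac{-1 - 2\lambda\tc + \sqrt{1+4\lambda\dl}}{2\lambda}$, and simplify $\dfrac{W}{t_0-\tc}$. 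Multiplying numerator and denominator by the conjugate $\sqrt{1+4\lambda\dl} + (1 + 2\lambda\tc)$ should turn the denominator into $(1+4\lambda\dl) - (1+2\lambda\tc)^2 = 4\lambda\dl - 4\lambda\tc - 4\lambda^2\tc^2 = 4\lambda(\dl - \tc(1+\lambda\tc))$, which cancels the $2\lambda$ factor and yields $\spe_{\min} = W\,\dfrac{1 + 2\lambda\tc + \sqrt{1+4\lambda\dl}}{2(\dl - \tc(1+\lambda\tc))}$, matching the claimed expression. I would also note that the denominator $\dl - \tc(1+\lambda\tc)$ is positive precisely under the feasibility hypothesis, so the formula is well-defined.

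There is essentially no hard step here — the argument is a monotonicity observation followed by solving one quadratic and one conjugate-rationalization. The only point requiring a little care is making sure the open versus closed endpoint is correct: at $\spe = \spe_{\min}$ we have $t = t_0$ and $\ex{\tps} = \dl$ exactly, so the endpoint is included, consistent with the half-open interval $[\spe_{\min}, +\infty)$ written in the statement (open only at $+\infty$). I would present the proof in that order: reduce to $t$, solve the quadratic, read off feasibility, invert the monotone map, and rationalize to get the stated bound.
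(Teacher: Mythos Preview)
Your proposal is correct and follows essentially the same approach as the paper: both rely on the monotonicity of $\ex{\tps}$ (you in the auxiliary variable $t=W/\spe+\tc$, the paper directly in $\spe$), identify the infeasibility threshold $\tc+\lambda\tc^2\geq\dl$ from the limiting value, and then solve the resulting quadratic to obtain the boundary speed. Your substitution and conjugate-rationalization make the algebra a bit more transparent, but the underlying argument is the same.
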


\begin{proof}
The function $s \mapsto \ex{\tps}(\spe)$ is strictly decreasing and converges to $\tc + \lambda \tc^2$.
Hence, if $\tc + \lambda \tc^2\geq \dl$, then there is no solution. Else there exist a minimum speed $\spe_0$
such that, $\ex{\tps}(\spe_0)=\dl$, and for all $\spe \geq \spe_0$, $\ex{\tps}(\spe)\leq \dl$.

More precisely, $\spe_0 = W\dfrac{1+2\lambda \tc + \sqrt{4 \lambda \dl +1}}{2(\dl -\tc(1+\lambda \tc))}$:
since there is a unique solution to $\ex{\tps}(\spe) = \dl$, we can solve this equation in order to 
find $\spe_0$.
\end{proof}

To simplify the following results, we define
\begin{equation}
\spe_0 = W\dfrac{1+2\lambda \tc + \sqrt{4 \lambda \dl +1}}{2(\dl -\tc(1+\lambda \tc))}.
\end{equation}
\begin{proposition}
In the \singles model, it is possible to numerically compute the optimal solution for \singlec as follows:
\begin{enumerate}
	\item If $\tc + \lambda \tc^2\geq \dl$, then there is no solution;
	\item Else, the optimal speed is $\max (\spe_0,\spe^{\star})$.
\end{enumerate}
\end{proposition}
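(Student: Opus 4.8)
The plan is to combine the convexity result of Lemma~\ref{g.singles} with the feasibility characterization of Lemma~\ref{f.singles}. By Lemma~\ref{f.singles}, if $\tc + \lambda \tc^2 \geq \dl$ the deadline constraint $\ex{\tps}(\spe) \leq \dl$ has no feasible speed whatsoever, so the problem is infeasible; this immediately gives case~1. So assume from now on that $\tc + \lambda \tc^2 < \dl$, in which case the feasible region is exactly the half-line $[\spe_0, +\infty)$.

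Next I would minimize $\ex{\energy}$ over this half-line. By Lemma~\ref{g.singles}, $g(\spe) = \ex{\energy}(\spe)$ is strictly convex on $\mathbb{R}^\star_+$ with a unique unconstrained minimizer $\spe^\star$. The minimization of a strictly convex function over a half-line $[\spe_0, +\infty)$ is standard: if $\spe^\star \geq \spe_0$ then $\spe^\star$ is feasible and hence optimal; if $\spe^\star < \spe_0$ then $g$ is strictly increasing on $[\spe_0, +\infty)$ (since $g'$ is increasing and $g'(\spe^\star) = 0$ implies $g'(\spe_0) > 0$), so the constrained minimum is attained at the left endpoint $\spe_0$. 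In both cases the optimal speed is $\max(\spe_0, \spe^\star)$, which is case~2. I would also note that this optimal speed is indeed feasible: $\max(\spe_0,\spe^\star) \geq \spe_0$, so the deadline constraint holds.

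I do not expect a genuine obstacle here, since both ingredients (convexity/unique minimum; reduction of the feasible set to a half-line) are already established in the preceding lemmas, and the argument is just the elementary one-dimensional constrained-optimization dichotomy. The only point requiring a line of care is justifying that $g$ is strictly increasing to the right of $\spe_0$ when $\spe^\star < \spe_0$, which follows from strict convexity of $g$ together with $g'(\spe^\star)=0$. The word ``numerically'' in the statement simply reflects that $\spe^\star$ (from Equation~\eqref{eq.spestar}) and $\spe_0$ are given by closed-form but unwieldy radical expressions; nothing in the proof actually needs a numerical procedure, only the comparison $\spe_0$ versus $\spe^\star$.
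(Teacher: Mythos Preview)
Your proof is correct and follows essentially the same approach as the paper: invoke Lemma~\ref{f.singles} to reduce the feasible set to $[\spe_0,+\infty)$ (or declare infeasibility), then use the convexity of $\ex{\energy}$ from Lemma~\ref{g.singles} to conclude via the standard dichotomy that the constrained minimizer is $\max(\spe_0,\spe^{\star})$. If anything, your justification of the monotonicity step (via $g'(\spe^\star)=0$ and strict convexity) is slightly more explicit than the paper's.
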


\begin{proof}
This is a corollary of Lemma~\ref{g.singles}: because $s \mapsto \ex{\tps}(\spe)$ is convex on $\mathbb{R^{\star}_+}$, 
then its restriction to the interval $[\spe_0,+\infty( $ is also convex and admits a unique minimum:
\begin{itemize}
	\item if $\spe^{\star}<\spe_0$, then \ex{\tps}(\spe) is increasing on $[\spe_0,+\infty($, then the optimal solution is $\spe_0$
	\item else, clearly the minimum is reached when $\spe=\spe^{\star}$.
\end{itemize}
The optimal solution is then $\max (\spe_0,\spe^{\star})$.
\end{proof}

\medskip
\subsubsection{Hard deadline}
In the \hd model, the bound on the execution time can be written as $2\left (\frac{W}{\spe} + \tc \right) \leq \dl$

\begin{lemma}
	\label{f.singles.hd}
In the \singles \hd model, for any $\dl$, if $2\tc\geq \dl$, then there is no solution. Otherwise, the 
constraint on the execution time can be rewritten as
$\spe \in \left [\frac{W}{\frac{\dl}{2}-\tc}; + \infty \right ($
\end{lemma}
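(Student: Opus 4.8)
The plan is to start from the worst-case execution time of \refequ{exptimewc}, specialized to a single chunk in the \singles model (so $\spr=\spe$), namely $\wtps = (W/\spe + \tc) + (W/\spe + \tc) = 2(W/\spe + \tc)$, and to rewrite the feasibility condition $\wtps \le \dl$ as an explicit range for $\spe$, exactly as was done for the \ed case in Lemma~\ref{f.singles}.

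First I would isolate $W/\spe$: the constraint $2(W/\spe + \tc) \le \dl$ is equivalent to $W/\spe \le \dl/2 - \tc$. Since $W>0$ and $\spe>0$, the left-hand side is strictly positive, so this inequality admits no solution as soon as $\dl/2-\tc \le 0$, i.e., as soon as $2\tc \ge \dl$; this yields the first case of the statement. Note that the threshold value $2\tc = \dl$ must be placed on the infeasible side, since $W/\spe \le 0$ is impossible for any finite speed.

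When $\dl/2-\tc > 0$, both sides of $W/\spe \le \dl/2-\tc$ are positive, so I may invert them, reversing the inequality, to obtain $\spe \ge W/(\dl/2-\tc)$, that is, $\spe \in \left[\frac{W}{\frac{\dl}{2}-\tc}, +\infty\right($. As a sanity check, the map $\spe \mapsto \wtps = 2(W/\spe+\tc)$ is continuous and strictly decreasing on $\mathbb{R^{\star}_+}$, with limit $2\tc$ as $\spe \to \infty$; hence whenever $\dl > 2\tc$ the feasible set is indeed a half-line whose left endpoint is the unique speed meeting the deadline with equality, which is precisely $W/(\dl/2-\tc)$.

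There is no genuine obstacle here: the argument is elementary algebra. The only points deserving a moment of care are the direction of the inequality when passing from a bound on $W/\spe$ to a bound on $\spe$, and verifying that the borderline case $2\tc = \dl$ produces infeasibility rather than a degenerate interval.
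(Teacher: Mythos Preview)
Your proof is correct and follows the same approach as the paper: both start from the worst-case constraint $2(W/\spe+\tc)\le\dl$ and solve it for $\spe$. Your version simply spells out the elementary algebra and the borderline case $2\tc=\dl$ that the paper leaves implicit.
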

\begin{proof}
The constraint on the execution time is now $2\left (\frac{W}{\spe} + \tc \right) \leq \dl$.
\end{proof}

\begin{proposition}
Let $\spe^{\star}$ be the solution indicated in \RRme{Equation~\ref{eq.spestar}}{Lemma~\ref{g.singles}}.
In the \singles \hd model if $2\tc\geq \dl$, then there is no solution. Otherwise, the minimum is 
reached when $\spe=\max \left (\spe^{\star},\frac{W}{\frac{\dl}{2}-\tc} \right )$.
\end{proposition}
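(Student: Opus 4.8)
The plan is to mirror exactly the structure of the \ed case treated just above, replacing the expected-time constraint by the worst-case constraint. The key observation is that Lemma~\ref{g.singles} is stated as valid for both the \ed and \hd models: the objective $\ex{\energy}(\spe) = (W\spe^2 + \ec)(1+\lambda(W/\spe+\tc))$ does not depend on which deadline we enforce, it is strictly convex on $\mathbb{R}^{\star}_+$, and it has the unique unconstrained minimizer $\spe^{\star}$ of Equation~\ref{eq.spestar}. So the only thing that changes is the feasible set for $\spe$, which by Lemma~\ref{f.singles.hd} is the half-line $[\tfrac{W}{\dl/2-\tc},+\infty)$ when $2\tc<\dl$, and empty when $2\tc\geq\dl$.

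First I would dispose of the infeasible case: if $2\tc\geq\dl$, then $\wtps = 2(W/\spe+\tc) > 2\tc \geq \dl$ for every $\spe>0$, so no speed meets the hard deadline, which is precisely what Lemma~\ref{f.singles.hd} records; there is no solution. For the remaining case $2\tc<\dl$, I would restrict the convex function $g(\spe)=\ex{\energy}(\spe)$ to the interval $I=[\tfrac{W}{\dl/2-\tc},+\infty)$. The restriction of a strictly convex function to a closed interval is again strictly convex, and here $g$ tends to $+\infty$ at $+\infty$, so it attains a unique minimum on $I$. Standard convexity reasoning then gives the location of this minimum depending on where $\spe^{\star}$ falls relative to the left endpoint $\tfrac{W}{\dl/2-\tc}$: if $\spe^{\star}\geq \tfrac{W}{\dl/2-\tc}$, then $\spe^{\star}\in I$ and it is the minimizer; if $\spe^{\star}< \tfrac{W}{\dl/2-\tc}$, then $g$ is increasing on all of $I$ (since $g'$ vanishes only at $\spe^{\star}$ and $g$ is convex, $g'>0$ on $I$), so the minimum on $I$ is at the left endpoint. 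In both subcases the optimal speed is $\max\!\left(\spe^{\star},\tfrac{W}{\dl/2-\tc}\right)$, which is the claimed formula.

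There is no real obstacle here — this proposition is a direct corollary of Lemma~\ref{g.singles} and Lemma~\ref{f.singles.hd}, exactly analogous to the \ed proposition proved a few lines earlier, with the single constraint interval from Lemma~\ref{f.singles.hd} in place of $[\spe_0,+\infty)$. The only point worth stating carefully is the feasibility dichotomy $2\tc \geq \dl$ versus $2\tc < \dl$, and that the monotonicity argument for $g$ on $I$ uses strict convexity plus $g'(\spe^{\star})=0$ to conclude $g'>0$ throughout $I$ when $\spe^{\star}$ lies to the left of $I$. I would therefore write the proof in two or three sentences, citing Lemmas~\ref{g.singles} and~\ref{f.singles.hd} and pointing to the identical argument in the \ed case rather than repeating the case analysis in full.
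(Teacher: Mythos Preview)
Your proposal is correct and follows essentially the same approach as the paper: cite Lemma~\ref{f.singles.hd} for the infeasibility case $2\tc\geq\dl$, and otherwise invoke the convexity of $\ex{\energy}$ from Lemma~\ref{g.singles} to conclude that the constrained minimum on $[\tfrac{W}{\dl/2-\tc},+\infty)$ is $\max(\spe^{\star},\tfrac{W}{\dl/2-\tc})$. The paper's own proof is in fact even terser than yours---it dispatches the second part in a single clause (``obvious by convexity'')---so your more explicit case analysis is entirely appropriate.
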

\begin{proof}
The fact that there is no solution when  $2\tc\geq \dl$ comes from Lemma~\ref{f.singles.hd}.
Otherwise, the result is obvious by convexity of the expected energy function.
\end{proof}

\subsection{Multiple speeds model}
\label{sec.multis-one}
In this section, we consider the general \multis model. We use the following notations:
\[ \ex{\energy}(\spe,\sigma) = (W \spe^2 + \ec) + \lambda (W/\spe + \tc)(W \spr^2 + \ec) \]

\RRme{
Let us first introduce a preliminary Lemma:
\begin{lemma}[Convexity \singlec]
	\label{conv.tight.undiv}
The problem of minimizing $A_0 + \alpha_0 x^2$ under the constraint $A_1 + \frac{\alpha_1}{x} \leq A_2$
where $A_0, A_1, A_2$ are constants and $\alpha_0, \alpha_1$ are positive constants is solved 
when $x$ is minimum, that is when $A_1 + \frac{\alpha_1}{x} = A_2$.
\end{lemma}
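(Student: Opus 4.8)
The plan is to exploit the monotonicity of both the objective and the constraint. First I would observe that the map $x \mapsto A_0 + \alpha_0 x^2$ is strictly increasing on $\mathbb{R}^{\star}_{+}$, since its derivative $2\alpha_0 x$ is positive there (this is where $\alpha_0 > 0$ is used; $A_0$ is an irrelevant additive constant). Consequently, minimizing this objective over any feasible subset of $\mathbb{R}^{\star}_{+}$ reduces to selecting the smallest feasible value of $x$, so it suffices to describe the feasible region precisely.

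Next I would analyze the constraint $A_1 + \frac{\alpha_1}{x} \leq A_2$. The function $x \mapsto A_1 + \frac{\alpha_1}{x}$ is strictly decreasing on $\mathbb{R}^{\star}_{+}$ (derivative $-\alpha_1/x^2 < 0$, again using $\alpha_1 > 0$), tends to $+\infty$ as $x \to 0^+$, and decreases to $A_1$ as $x \to +\infty$. Hence if $A_2 \leq A_1$ the constraint is never met and there is no solution; otherwise the feasible set is exactly the half-line $\left[x_0, +\infty\right)$, where $x_0 = \frac{\alpha_1}{A_2 - A_1}$ is the unique root of $A_1 + \frac{\alpha_1}{x} = A_2$, obtained by solving the equation directly.

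Combining the two observations, on the feasible half-line the strictly increasing objective attains its minimum at the left endpoint $x = x_0$, which is precisely the point where the constraint is tight. The only subtlety worth flagging is the degenerate case $A_2 \leq A_1$ (infeasibility); apart from that, the proof is a direct monotonicity computation and I expect no genuine obstacle. In the intended application $x$ plays the role of an execution or re-execution speed, $\alpha_0 x^2$ is the associated energy term, and $A_1 + \frac{\alpha_1}{x}$ is a time-like quantity bounded by a deadline, so the lemma formalizes the intuition that it is always optimal to run as slowly as the deadline permits.
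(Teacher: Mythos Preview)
Your argument is correct and follows essentially the same route as the paper: both proofs observe that the objective is strictly increasing on $\mathbb{R}^{\star}_{+}$ while the constraint function is strictly decreasing, so the optimum lies at the smallest feasible $x$, where the constraint is tight. You add a bit more detail than the paper (explicit derivatives, the closed form $x_0=\alpha_1/(A_2-A_1)$, and the infeasibility case $A_2\le A_1$), but the underlying idea is identical.
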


\begin{proof}
The function $A_0 + \alpha_0 x^2$ is strictly increasing, so it is is minimized when $x$ is minimum.
The function $A_1 + \frac{\alpha_1}{x}$ is strictly decreasing with $\lim_{x\rightarrow 0}= +\infty$, 
hence an upper bound is reached when $x$ is minimum. With those two results, we can say that the 
constraint should be tight in order to solve our problem.
\end{proof}
}{}

\subsubsection{Expected deadline}
The execution time in the \multis \ed model can be written as
$$\ex{\tps}(\spe,\spr) = (W/\spe + \tc) + \lambda (W/\spe + \tc)(W/\spr + \tc)$$
We start by giving a useful property, namely that the deadline is always tight in the \multis \ed model:

\begin{lemma}
In the \multis \ed model, in order to minimize the energy consumption, the deadline should be tight.
\end{lemma}
\RRme{
\begin{proof}
Considering \spe and $W$ fixed, then
$\ex{\tps}(\spe,\spr) = \tps_0 + \frac{\alpha}{\spr} \leq \dl$, and 
$\ex{\energy}(\spe,\spr) = \energy_0 + \alpha \spr^2$,
where $\tps_0 = (W / \spe + \ec) + \lambda \tc (W/\spe + \tc) $,
$\energy_0 = (W \spe^2 + \ec) + \lambda \ec (W/\spe + \tc) $
and $\alpha = W (W/\spe + \tc)$ are constant.
With Lemma~\ref{conv.tight.undiv} we conclude that the deadline should be tight.
\end{proof}
}{Due to lack of space, the proof is available in the companion research report~\cite{rr-abmrr}.}


This lemma allows us to express $\spr$ as a function of $\spe$:
\[ \spr = \frac{\lambda W}{\frac{\dl}{ \frac{W}{\spe} + \tc} - (1 + \lambda \tc)}. \]
Also we reduce the bi-criteria problem to the minimization problem of the single-variable function:
\RRme{
\begin{equation}
	\label{eq.multis.undiv}
 \spe \mapsto W \spe^2 +\ec + \lambda \left(  \frac{W}{\spe} + \tc\right ) \left (W \left (\frac{\lambda W}{\frac{\dl}{ \frac{W}{\spe} + \tc} - (1 + \lambda \tc)} \right )^2 + \ec \right) 
 \end{equation}
 }{
 \begin{align}
	\label{eq.multis.undiv}
 \spe \mapsto& W \spe^2 +\ec + \lambda \left(  \frac{W}{\spe} + \tc\right ) \times  \\
 & \quad \left (W \left (\frac{\lambda W}{\frac{\dl}{ \frac{W}{\spe} + \tc} - (1 + \lambda \tc)} \right )^2 + \ec \right)  \nonumber
 \end{align}}
which can be solved numerically.

\medskip
\subsubsection{Hard deadline}

In this model we have similar results as with \ed. The constraint on the execution time 
writes: $\frac{W}{\spe} + \tc + \frac{W}{\spr} + \tc \leq \dl$.
\RRme{
Another corollary of Lemma~\ref{conv.tight.undiv} is:
\begin{lemma}
In the \multis \ed model, in order to minimize the energy consumption, the deadline should be tight.
\end{lemma}
}{
\begin{lemma}
In the \multis \ed model, in order to minimize the energy consumption, the deadline should be tight.
\end{lemma}
Again, the proof can be found in the companion research report~\cite{rr-abmrr}.
}

This lemma allows us to express \spr as a function of \spe:
\[
\spr = \frac{W}{(\dl-2\tc)\spe - W} \spe
\]
Finally, we reduce the bi-criteria problem to the minimization problem of the single-variable function:
\RRme{
\begin{equation}
	\label{eq.multis.undiv.hd}
 \spe \mapsto W \spe^2 +\ec + \lambda \left(  \frac{W}{\spe} + \tc\right ) \left (W \left (\frac{W}{(\dl-2\tc)\spe - W} \spe \right )^2 + \ec \right)
 \end{equation}
 }{
  \begin{align}
	\label{eq.multis.undiv.hd}
 \spe \mapsto& W \spe^2 +\ec + \lambda \left(  \frac{W}{\spe} + \tc\right ) \times  \\
 & \quad \left (W \left (\frac{W}{(\dl-2\tc)\spe - W} \spe \right )^2 + \ec \right)  \nonumber
 \end{align}
 }
which can be solved numerically.

\section{Several chunks}
\label{sec.multiple}

 In this section, we deal with the general problem of a divisible task of size $W$ 
 that can be split into an arbitrary number of chunks.
 We divide the task into $n$ chunks of size $w_i$ such that
 $\sum_{i=1}^n w_i = W$. Each chunk is executed once at speed
 $\spe_i$, and re-executed (if necessary) at speed $\spr_i$.  
The problem is to find the values of $n$, $w_i$, $\spe_i$ and $\spr_i$ that minimize 
\RRme{
\[\ex{\energy} =
\sum_i \left (w_i \spe_i^2 + \ec \right ) + \lambda \sum_i  \left (\frac{w_i}{\spe_i} + \tc \right ) \left (w_i \spr_i^2 + \ec \right )
\]
}{
\[\ex{\energy} \!=\!
\sum_i \!\left (w_i \spe_i^2 +\! \ec \right ) + \lambda\! \sum_i \! \left (\frac{w_i}{\spe_i} +\! \tc \right )\! \left (w_i \spr_i^2 +\! \ec \right )
\]
}
   subject to 
   \[\sum_i \left (\frac{w_i}{\spe_i} + \tc \right ) + \lambda \sum_i  \left (\frac{w_i}{\spe_i} + \tc \right ) \! \left (\frac{w_i}{\spr_i} + \tc \right ) \leq \dl\] in the \ed model, and subject to
\[ \sum_i \left (\frac{w_i}{\spe_i} + \tc \right )  +  \sum_i \left (\frac{w_i}{\spr_i} + \tc \right ) \leq \dl \] in the \hd model.
We first deal
with the \singles model, where we enforce $\spr_i = \spe_i$, before
dealing with the \multis model.

\subsection{Single speed model}
\subsubsection{Expected deadline}
In this section, we deal with the \singles \ed model and consider that for all $i$, $\spr_i=\spe_i$. Then:
\RRme{
\begin{align*}
\ex{\tps}(\cup_i (w_i,\spe_i,\spe_i)) &= \sum_i \left (\frac{w_i}{\spe_i} + \tc \right ) + \lambda \sum_i  \left (\frac{w_i}{\spe_i} + \tc \right )^2\\
\ex{\energy}(\cup_i (w_i,\spe_i,\spe_i))&=\sum_i \left (w_i \spe_i^2 + \ec \right )\left (1 + \lambda\left (\frac{w_i}{\spe_i} + \tc \right ) \right )
\end{align*}
}{
\begin{align*}
\ex{\tps}(\cup_i (w_i,\spe_i,\spe_i))\! &=\! \sum_i\! \left (\frac{w_i}{\spe_i}\! +\! \tc \right )\! +\! \lambda \sum_i \! \left (\frac{w_i}{\spe_i}\! +\! \tc \right )^2\\
\ex{\energy}(\cup_i (w_i,\spe_i,\spe_i))\!&=\!\sum_i\! \left (w_i \spe_i^2 \!+\! \ec \right )\!\left (1 \!+\! \lambda\left (\frac{w_i}{\spe_i}\! +\! \tc \right ) \! \right )
\end{align*}
}
\begin{theorem}
\label{th.gopi}
  In  the optimal solution to the problem with the \singles \ed model,
  all $n$ chunks are of
  equal size $W/n$ and executed at the same speed \spe.
\end{theorem}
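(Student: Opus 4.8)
The plan is to show that any feasible solution with $n$ chunks can be transformed, without increasing energy and without violating the deadline, into one in which all chunks have the same size and the same speed. The natural approach is a two-stage convexity/symmetrization argument. First I would fix the number of chunks $n$ and the partition $(w_1,\dots,w_n)$, and argue about the speeds; then I would argue about the sizes.

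For the first stage, recall that in the \singles\ \ed\ model the constraint is $\sum_i \bigl(\frac{w_i}{\spe_i}+\tc\bigr)\bigl(1+\lambda(\frac{w_i}{\spe_i}+\tc)\bigr)\le \dl$ and the objective is $\sum_i (w_i\spe_i^2+\ec)\bigl(1+\lambda(\frac{w_i}{\spe_i}+\tc)\bigr)$. I would substitute $t_i = \frac{w_i}{\spe_i}+\tc$, i.e.\ $\spe_i = \frac{w_i}{t_i-\tc}$, so that the constraint becomes the separable linear-in-disguise condition $\sum_i (t_i+\lambda t_i^2)\le \dl$ and the objective becomes $\sum_i \bigl(\frac{w_i^3}{(t_i-\tc)^2}+\ec\bigr)(1+\lambda t_i)$. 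The key point I would establish is that, for fixed $w_i$, the per-chunk energy is a convex function of $t_i$ on the feasible range, while the per-chunk time contribution $t_i+\lambda t_i^2$ is also convex; hence at the optimum a Lagrange/KKT stationarity condition holds, and more importantly the optimal $t_i$ is determined by $w_i$ through a common multiplier. Combined with the second stage this forces the $w_i$ to be equal. Alternatively — and this is probably the cleaner route — I would first show directly that equal chunk sizes are optimal for fixed $n$ by a swapping/averaging argument: given two chunks of sizes $w_i\ne w_j$ both run at their locally optimal speeds, replacing them by two chunks of size $\frac{w_i+w_j}{2}$ (and re-optimizing speeds) cannot increase energy, because the map $w\mapsto(\text{optimal per-chunk energy for a given time budget})$ is convex in $w$. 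Jensen then yields equal sizes $W/n$, after which symmetry of the resulting single-variable problem (identical constraint contribution, identical energy contribution per chunk) gives a common speed \spe.

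The main obstacle I anticipate is making the convexity-in-$w$ claim rigorous, because the per-chunk time budget is itself a free variable that must be re-optimized after merging chunks; one has to be careful that the merged configuration is still feasible. The clean way around this is to work with the full KKT system for fixed $n$: show that the objective is jointly convex (or at least that the feasible set together with the objective admits a unique symmetric critical point) in the variables $(w_i,t_i)$ after the substitution above, exploiting that $\frac{w^3}{(t-\tc)^2}$ is jointly convex for $w>0$, $t>\tc$. Then stationarity of the Lagrangian in $w_i$ gives one equation and stationarity in $\spe_i$ (equivalently $t_i$) gives another, both of which depend on $(w_i,\spe_i)$ only through the same expressions; so at any critical point all pairs $(w_i,\spe_i)$ satisfy the same two equations, and uniqueness of the solution forces $w_i=W/n$ and $\spe_i=\spe$ for all $i$. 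I would close by noting that the existence of an optimum (rather than an infimum) follows since speeds are bounded below by the deadline feasibility constraint and the energy tends to infinity with the speeds, so the KKT analysis indeed identifies the global optimum.
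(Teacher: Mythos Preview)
Your approach is genuinely different from the paper's, and it can be made to work, but there is one concrete gap you should be aware of.

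\textbf{What the paper does.} The paper gives a direct exchange argument with no convexity machinery. It assumes two chunks $(w_1,\spe_1)$ and $(w_2,\spe_2)$ differ (either in size or speed), replaces both by two chunks of size $w=\tfrac12(w_1+w_2)$ run at the explicit common speed $\spe=\spe_A:=\dfrac{w_1+w_2}{w_1/\spe_1+w_2/\spe_2}$, and verifies by elementary algebra that (i) the expected-time contribution does not increase and (ii) the expected-energy contribution strictly decreases. The key inequalities reduce to showing that $g(u)=u^2+2/u-3\ge 0$ on $\mathbb{R}^{\star}_+$ and a companion polynomial inequality. No KKT, no Hessians.

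\textbf{Where your sketch is right.} After the substitution $t_i=w_i/\spe_i+\tc$, the problem is symmetric under permutations of the chunks, the constraint $\sum_i(t_i+\lambda t_i^2)\le D$ is convex, and the objective is separable. If the per-chunk energy is \emph{strictly} jointly convex in $(w_i,t_i)$, then the minimizer is unique and, by permutation symmetry, must have all $(w_i,t_i)$ equal; this yields $w_i=W/n$ and a common speed immediately. That is a clean one-line argument once strict convexity is in hand.

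\textbf{The gap.} You cite joint convexity of $w^3/(t-\tc)^2$, but that function is only \emph{weakly} convex (its Hessian has determinant identically zero), and in any case the actual per-chunk energy is $\bigl(w^3/(t-\tc)^2+\ec\bigr)(1+\lambda t)$, which carries the extra factor $(1+\lambda t)$ that you do not discuss. Products of convex and affine are not automatically convex, and weak convexity is not enough for uniqueness. So as written, neither your ``swapping/Jensen'' route nor your KKT route is justified: the former needs \emph{strict} convexity to force equality rather than merely exhibit one symmetric optimum, and the latter needs it to conclude that the common KKT system has a unique solution $(w_i,t_i)$.

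The good news is that the needed strict convexity does hold: a direct computation of the Hessian of $h(w,t)=w^3(1+\lambda t)/(t-\tc)^2$ gives determinant $\dfrac{3\,w^4}{(t-\tc)^6}\,\lambda(t-\tc)\bigl(\lambda t+3\lambda\tc+4\bigr)>0$ for $w>0,\ t>\tc$, together with $h_{ww}>0$. With that verification added, your symmetry-plus-strict-convexity argument is complete and arguably more conceptual than the paper's; the paper's exchange argument, on the other hand, is entirely elementary and avoids any second-derivative bookkeeping.
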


\begin{proof}
  Consider the optimal solution, and assume by contradiction that it includes two chunks $w_1$ and $w_2$,
  executed at speeds  $\spe_1$ and $\spe_2$, where  either $\spe_1 \neq\spe_2$, or
  $\spe_1 =\spe_2$ and $w_1 \neq w_2$.
  Let us assume without loss of generality that $\frac{w_1}{\spe_1}
  \geq \frac{w_2}{\spe_2}$.

 We show that we can find a strictly better solution where both chunks have size
  $w=\frac{1}{2}(w_1+w_2)$, and are executed at same speed~$\spe$ (to
  be defined later). The size and speed of the other chunks are kept the same. 
  We will show that the execution time of the new solution is not larger 
  than in the optimal solution, while its energy consumption is strictly smaller,
  hence leading to the  
  contradiction.

We have seen that
\RRme{
\begin{align*}
\ex{\tps}((w_1,\spe_1),(w_2,\spe_2)) &= \frac{w_1}{\spe_1} +\tc + \frac{w_2}{\spe_2} +\tc + \lambda\left ( \frac{w_1}{\spe_1} + \tc\right )^2+ \lambda\left ( \frac{w_2}{\spe_2} + \tc\right )^2 \\
\ex{\tps}((w,\spe),(w,\spe)) &= 2\left (\frac{w}{\spe} +\tc \right )+ 2\lambda\left ( \frac{w}{\spe} + \tc\right )^2
\end{align*}
}{
\begin{align*}
\ex{\tps}&((w_1,\spe_1),(w_2,\spe_2)) = \frac{w_1}{\spe_1} +\tc + \frac{w_2}{\spe_2} +\tc \\  
& + \lambda\left ( \frac{w_1}{\spe_1} + \tc\right )^2+ \lambda\left ( \frac{w_2}{\spe_2} + \tc\right )^2 \\
\ex{\tps}&((w,\spe),(w,\spe)) = \\ 
& 2\left (\frac{w}{\spe} +\tc \right )+ 2\lambda\left ( \frac{w}{\spe} + \tc\right )^2
\end{align*}}
Hence,
\RRme{
\[\ex{\tps}((w_1,\spe_1),(w_2,\spe_2))\!-\!\ex{\tps}((w,\spe),(w,\spe)) = \left (\frac{w_1}{\spe_1}\!+\! \frac{w_2}{\spe_2}\!-\!\frac{2w}{\spe} \right )\!+\!
\lambda \left ( \left (\frac{w_1}{\spe_1}\right )^2\!+\!\left ( \frac{w_2}{\spe_2}\right )^2\!-\!2\left (\frac{w}{\spe}\right )^2\right )
\]
}{
\begin{align*}
&\ex{\tps}((w_1,\spe_1),(w_2,\spe_2))\!-\!\ex{\tps}((w,\spe),(w,\spe)) = \\  
&\left (\frac{w_1}{\spe_1}\!+\! \frac{w_2}{\spe_2}\!-\!\frac{2w}{\spe} \right )\!+\!
\lambda \left ( \left (\frac{w_1}{\spe_1}\right )^2\!+\!\left ( \frac{w_2}{\spe_2}\right )^2\!-\!2\left (\frac{w}{\spe}\right )^2\right )
\end{align*}}
Similarly, we know that:
\RRme{
\begin{align*}
\ex{\energy}((w_1,\spe_1),(w_2,\spe_2)) &= w_1\spe_1^2 +\ec + w_2\spe_2^2 +\ec 
+\lambda \left ( \frac{w_1}{\spe_1}\!+\!\tc\right )\!\left ( w_1\spe_1^2\!+\!\ec\right ) 
+\lambda\left ( \frac{w_2}{\spe_2}\!+\!\tc\right )\!\left ( w_2 \spe_2^2\!+\!\ec\right )\\
\ex{\energy}((w,\spe),(w,\spe)) &=  2\left (w\spe^2 +\ec \right )  +2\lambda\left ( \frac{w}{\spe} + \tc\right )\left ( w\spe^2 + \ec\right )
\end{align*}
}{
\begin{align*}
\ex{\energy}&((w_1,\spe_1),(w_2,\spe_2)) = w_1\spe_1^2 +\ec + w_2\spe_2^2 +\ec \\
&+\lambda \left ( \frac{w_1}{\spe_1}\!+\!\tc\right )\!\left ( w_1\spe_1^2\!+\!\ec\right ) \\
&+\lambda\left ( \frac{w_2}{\spe_2}\!+\!\tc\right )\!\left ( w_2 \spe_2^2\!+\!\ec\right )\\
\ex{\energy}&((w,\spe),(w,\spe)) =  2\left (w\spe^2 +\ec \right ) \\
& +2\lambda\left ( \frac{w}{\spe} + \tc\right )\left ( w\spe^2 + \ec\right )
\end{align*}}
and deduce
\RRme{
\begin{align}
	\label{eq.energy.blou}
\ex{\energy}&((w_1,\spe_1),(w_2,\spe_2)) -\ex{\energy}((w,\spe),(w,\spe)) \nonumber \\
&=  \left (w_1 \spe_1^2 + w_2\spe_2^2 - 2w\spe^2 \right ) \left ( 1+\lambda \tc\right ) 
+ \lambda \ec \left(\frac{w_1}{\spe_1} + \frac{w_2}{\spe_2} - \frac{2w}{\spe} \right)
+ \lambda \left ( w_1^2 \spe_1 + w_2^2 \spe_2 - 2w^2 \spe \right )
\end{align}
}{
\begin{align}
	\label{eq.energy.blou}
\ex{\energy}&((w_1,\spe_1),(w_2,\spe_2))
-\ex{\energy}((w,\spe),(w,\spe)) = \nonumber \\ 
& \left (w_1 \spe_1^2 + w_2\spe_2^2 - 2w\spe^2 \right ) \left ( 1+\lambda \tc\right ) \nonumber\\
&+ \lambda \ec \left(\frac{w_1}{\spe_1} + \frac{w_2}{\spe_2} - \frac{2w}{\spe} \right)\nonumber\\
&+ \lambda \left ( w_1^2 \spe_1 + w_2^2 \spe_2 - 2w^2 \spe \right )
\end{align}}

Let us now define
\begin{align*}
\spe_A &= \frac{2w}{\frac{w_1}{\spe_1} + \frac{w_2}{\spe_2}} = \frac{w_1 + w_2}{\frac{w_1}{\spe_1} + \frac{w_2}{\spe_2}}\\
\spe_B &=\!\frac{\sqrt{2}w}{\left ( \left (\frac{w_1}{\spe_1}\right )^2\!+\!\left ( \frac{w_2}{\spe_2}\right )^2\right )^{1/2}}\!=\!\frac{w_1+w_2}{\left ( 2\left (\frac{w_1}{\spe_1}\right )^2\!+\!2\left ( \frac{w_2}{\spe_2}\right )^2\right )^{1/2}}  
\end{align*}

We then fix $\spe = \max ( \spe_A, \spe_B)$. Then, since $\spe \geq
\spe_A$, we have $\frac{w_1}{\spe_1}\!+\!\frac{w_2}{\spe_2}\!-\!
\frac{2w}{\spe} \geq 0$, and since $\spe \geq \spe_B$, we have
$\left (\frac{w_1}{\spe_1}\right )^2\!+\!\left (\frac{w_2}{\spe_2}\right )^2\!-\!2\left ( \frac{w}{\spe}\right )^2 \geq
0$. This ensures that $\ex{\tps}((w_1,\spe_1),(w_2,\spe_2))-\ex{\tps}((w,\spe),(w,\spe)) \geq 0$. 

Note that
\RRme{ 
\begin{align*}
\frac{\left (w_1+w_2\right )^2}{\spe_B^2} &- \frac{\left (w_1+w_2\right )^2}{\spe_A^2} 
=  2\left (\frac{w_1}{\spe_1}\right )^2 + 2\left ( \frac{w_2}{\spe_2}\right )^2 - \left (\frac{w_1}{\spe_1} +\frac{w_2}{\spe_2}\right )^2 
 = \left (\frac{w_1}{\spe_1} -\frac{w_2}{\spe_2}\right )^2 \geq 0
\end{align*}
 }{
\begin{align*}
\frac{\left (w_1+w_2\right )^2}{\spe_B^2} &- \frac{\left (w_1+w_2\right )^2}{\spe_A^2} \\
&=  2\left (\frac{w_1}{\spe_1}\right )^2 + 2\left ( \frac{w_2}{\spe_2}\right )^2 - \left (\frac{w_1}{\spe_1} +\frac{w_2}{\spe_2}\right )^2 \\
& = \left (\frac{w_1}{\spe_1} -\frac{w_2}{\spe_2}\right )^2 \geq 0
\end{align*}}
This means that $\spe_A \geq \spe_B$, hence $\spe = \spe_A$.
To prove that $\ex{\energy}((w_1,\spe_1),(w_2,\spe_2))
-\ex{\energy}((w,\spe),(w,\spe)) > 0$, we want to show that:
\begin{enumerate}
	\item $w_1 \spe_1^2 + w_2\spe_2^2 - 2w\spe^2 \geq 0$
	\item $\frac{w_1}{\spe_1} + \frac{w_2}{\spe_2} - \frac{2w}{\spe} \geq 0$
	\item $w_1^2 \spe_1 + w_2^2 \spe_2 - 2w^2 \spe \geq 0$
	\item and that one of the previous inequalities is strict.
\end{enumerate}
Note that by definition of $\spe=\spe_A$, the second inequality is
true.

\smallskip
\paragraph{Let us first show that $ w_1 \spe_1^2 + w_2\spe_2^2 - 2w\spe_A^2 \geq 0$}
\RRme{
\begin{align*}
\left (\frac{w_1}{\spe_1} + \frac{w_2}{\spe_2}\right )^2&\left (w_1 \spe_1^2 + w_2\spe_2^2 - (w_1 + w_2)\left( \frac{w_1 + w_2}{\frac{w_1}{\spe_1} + \frac{w_2}{\spe_2}} \right)^2  \right ) \\
&\quad=w_1^3 + w_1w_2^2\left ( \frac{\spe_1}{\spe_2}\right )^2 + 2\frac{w_1w_2}{\spe_1 \spe_2}w_1\spe_1^2 +w_2^3 
+ w_2w_1^2\left ( \frac{\spe_2}{\spe_1}\right )^2 + 2\frac{w_2w_1}{\spe_1 \spe_2}w_2\spe_2^2  - \left ( w_1 + w_2\right )^3 \\
&\quad = w_1w_2^2 \left ( \left ( \frac{\spe_1}{\spe_2}\right )^2 +2\frac{\spe_2}{\spe_1} -3 \right ) + w_1^2w_2 \left ( \left ( \frac{\spe_2}{\spe_1}\right )^2 +2\frac{\spe_1}{\spe_2} -3 \right ) \\
&\quad = w_1w_2^2 g\!\!\left(\frac{\spe_1}{\spe_2}\right) + w_1^2w_2 g\!\!\left(\frac{\spe_2}{\spe_1}\right)
\end{align*}
}{
\begin{align*}
&\left (\frac{w_1}{\spe_1}\! +\! \frac{w_2}{\spe_2}\right )^2\!\left (w_1 \spe_1^2\! +\! w_2\spe_2^2 \!-\! (w_1\! +\! w_2)\left( \frac{w_1 \!+\! w_2}{\frac{w_1}{\spe_1}\!+\! \frac{w_2}{\spe_2}} \right)^2  \!\right ) \\
&\quad=w_1^3 + w_1w_2^2\left ( \frac{\spe_1}{\spe_2}\right )^2 + 2\frac{w_1w_2}{\spe_1 \spe_2}w_1\spe_1^2 +w_2^3 \\
& \quad \quad+ w_2w_1^2\left ( \frac{\spe_2}{\spe_1}\right )^2 + 2\frac{w_2w_1}{\spe_1 \spe_2}w_2\spe_2^2  - \left ( w_1 + w_2\right )^3 \\
&\quad = w_1w_2^2 g\!\!\left(\frac{\spe_1}{\spe_2}\right) + w_1^2w_2 g\!\!\left(\frac{\spe_2}{\spe_1}\right)
\end{align*}}
where $g: u \mapsto u^2 +\frac{2}{u} -3$. It is easy to show that $g$ is nonnegative on 
$\mathbb{R}^{\star}_+$: indeed, $g'(u) = \frac{2}{u^2}(u^{3}-1)$ is negative in $[0,1[$ and positive in $]1,\infty[$, and the unique minimum is $g(1)=0$.
We derive that  $w_1 \spe_1^2 +
w_2\spe_2^2 - 2w\spe_A^2 \geq 0$. 

\smallskip
\paragraph{Let us now show that $w_1^2 \spe_1 + w_2^2 \spe_2 - 2w^2 \spe \geq 0$}
Remember that $2w =w_1 + w_2$.
\RRme{
\begin{align*}
2\left (\frac{w_1}{\spe_1} + \frac{w_2}{\spe_2}\right )&\left (w_1^2 \spe_1 + w_2^2\spe_2 -  \frac{\left(w_1 + w_2 \right)^3}{2\left (\frac{w_1}{\spe_1} + \frac{w_2}{\spe_2}\right )}  \right ) \\
&=2w_1^3 + 2w_1^2w_2 \frac{\spe_1}{\spe_2}+ 2w_2^3 + 2w_1w_2^2 \frac{\spe_2}{\spe_1} - \left ( w_1 + w_2\right )^3 \\
&= w_1^3 + w_2^3 + w_1^2w_2 \left ( 2\frac{\spe_1}{\spe_2} -3\right )+ w_1w_2^2 \left ( 2\frac{\spe_2}{\spe_1} -3 \right )
\end{align*}
}{
\begin{align*}
2&\left (\frac{w_1}{\spe_1} + \frac{w_2}{\spe_2}\right )\!\left (w_1^2 \spe_1 + w_2^2\spe_2 -  \frac{\left(w_1 + w_2 \right)^3}{2\left (\frac{w_1}{\spe_1} + \frac{w_2}{\spe_2}\right )}  \right ) \\
&=2w_1^3 + 2w_1^2w_2 \frac{\spe_1}{\spe_2}+ 2w_2^3 + 2w_1w_2^2 \frac{\spe_2}{\spe_1} - \left ( w_1 + w_2\right )^3 \\
&= w_1^3 + w_2^3 + w_1^2w_2 \left ( 2\frac{\spe_1}{\spe_2} -3\right )+ w_1w_2^2 \left ( 2\frac{\spe_2}{\spe_1} -3 \right )
\end{align*}}
Remember that we assumed without loss of generality that 
$\frac{w_1}{\spe_1} \geq \frac{w_2}{\spe_2}$.
\RRme{
\begin{align*}
2\left (\frac{w_1}{\spe_1} + \frac{w_2}{\spe_2}\right ) &\left (w_1^2 \spe_1 + w_2^2\spe_2 -  \frac{\left(w_1 + w_2 \right)^3}{2\left (\frac{w_1}{\spe_1} + \frac{w_2}{\spe_2}\right )}  \right ) \\
\geq&w_2^3 \!\left (\!\left (\!\frac{\spe_1}{\spe_2}\!\right )^3\!+\!1\!+\!\left (\!\frac{\spe_1}{\spe_2}\!\right )^2\!\left (\!2\frac{\spe_1}{\spe_2}\! -\!3\!\right )\!+\!\frac{\spe_1}{\spe_2}\! \left (\! 2\frac{\spe_2}{\spe_1}\! -\!3\! \right )\! \right )\\
\geq& 3 w_2^3\!\left (\!\left (\!\frac{\spe_1}{\spe_2}\!\right )^3\!-\!\left (\!\frac{\spe_1}{\spe_2}\!\right )^2\!-\!\frac{\spe_1}{\spe_2}\!+\!1\!\right) \\
\geq &3 w_2^3\!\left (\!\left (\!\frac{\spe_1}{\spe_2}\!-\!1\!\right )^2\!\left (\!\frac{\spe_1}{\spe_2}\!+\!1\!\right )\!\right) \geq 0
\end{align*}
}{
\begin{align*}
2&\left (\frac{w_1}{\spe_1} + \frac{w_2}{\spe_2}\right ) \left (w_1^2 \spe_1 + w_2^2\spe_2 -  \frac{\left(w_1 + w_2 \right)^3}{2\left (\frac{w_1}{\spe_1} + \frac{w_2}{\spe_2}\right )}  \right ) \\
\geq&w_2^3 \!\left (\!\left (\!\frac{\spe_1}{\spe_2}\!\right )^3\!+\!1\!+\!\left (\!\frac{\spe_1}{\spe_2}\!\right )^2\!\left (\!2\frac{\spe_1}{\spe_2}\! -\!3\!\right )\!+\!\frac{\spe_1}{\spe_2}\! \left (\! 2\frac{\spe_2}{\spe_1}\! -\!3\! \right )\! \right )\\
\geq& 3 w_2^3\!\left (\!\left (\!\frac{\spe_1}{\spe_2}\!\right )^3\!-\!\left (\!\frac{\spe_1}{\spe_2}\!\right )^2\!-\!\frac{\spe_1}{\spe_2}\!+\!1\!\right) \\
\geq &3 w_2^3\!\left (\!\left (\!\frac{\spe_1}{\spe_2}\!-\!1\!\right )^2\!\left (\!\frac{\spe_1}{\spe_2}\!+\!1\!\right )\!\right) \geq 0
\end{align*}}

Let us now conclude our study:  if
$\frac{\spe_1}{\spe_2} \neq 1$, then the energy consumption of the
optimal solution is strictly greater than the one from our solution
which is a contradiction. Hence we must have $\spe_1=\spe_2$, and  $w_1 \neq w_2$ (in
fact, since we assumed that $\frac{w_1}{\spe_1} \geq
\frac{w_2}{\spe_2}$, we must have $w_1>w_2$). Then we can refine the previous analysis, and
obtain that $w_1^2 \spe_1 + w_2^2 \spe_2 - 2w^2
\spe > 0$: again, the optimal energy
consumption is strictly greater than in our solution; this is the final
contradiction and concludes the proof.
\end{proof}

\smallskip
Thanks to this result, we know that the problem with $n$ chunks can be rewritten as follows:
find \spe such that 
\RRme{
\begin{align*}
n\left(\frac{W}{n\spe} + \tc\right ) + n\lambda\left (\frac{W}{n\spe} +\tc\right )^2  = \frac{W}{\spe} + n\tc + \frac{\lambda}{n}\left (\frac{W}{\spe} +n\tc\right )^2 \leq \dl
\end{align*}
}{
\begin{align*}
n&\left(\frac{W}{n\spe} + \tc\right ) + n\lambda\left (\frac{W}{n\spe} +\tc\right )^2 \\
&\quad = \frac{W}{\spe} + n\tc + \frac{\lambda}{n}\left (\frac{W}{\spe} +n\tc\right )^2 \leq \dl
\end{align*}}
in order to minimize
\RRme{
\begin{align*}
n\left(\frac{W}{n}\spe^2 + \ec\right )+ n\lambda\left (\frac{W}{n\spe} +\tc\right )\left ( \frac{W}{n}\spe^2 +\ec\right )  = \left (W\spe^2 + n\ec\right )\left ( 1 + \frac{\lambda}{n}\left (\frac{W}{\spe} +n\tc\right ) \right)
\end{align*}
}{
\begin{align*}
n&\left(\frac{W}{n}\spe^2 + \ec\right )+ n\lambda\left (\frac{W}{n\spe} +\tc\right )\left ( \frac{W}{n}\spe^2 +\ec\right ) \\
&\quad = \left (W\spe^2 + n\ec\right )\left ( 1 + \frac{\lambda}{n}\left (\frac{W}{\spe} +n\tc\right ) \right)
\end{align*}}

One can see that this reduces to the \singlec problem with the \singles model (Section~\ref{sec.singles-one})
up to the following parameter changes:
\begin{itemize}
	\item $\lambda \leftarrow \frac{\lambda}{n}$
	\item $\tc \leftarrow n\tc$
	\item $\ec \leftarrow n\ec$
\end{itemize}


If the number of chunks~$n$ is given, we can express the minimum speed such that
there is a solution with $n$ chunks: 
\begin{equation}
\spe_0(n) = W\dfrac{1+2\lambda \tc + \sqrt{4 \frac{\lambda \dl}{n} +1}}{2(\dl -n\tc(1+\lambda \tc))}.
\end{equation}

We can verify that when $\dl \leq n\tc ( 1 + \lambda n)$, there is no solution, 
hence obtaining an upper bound on~$n$.
Therefore, the two variables problem (with unknowns $n$ and $s$) can be solved numerically.

\subsubsection{Hard deadline}
In the \hd model, all results still hold, they are even easier to prove since we do not need to 
introduce a second speed.
\begin{theorem}
  In  the optimal solution to the problem with the \singles \hd model,
  all $n$ chunks are of
  equal size $W/n$ and executed at the same speed \spe.
\end{theorem}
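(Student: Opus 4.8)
Here is a plan of the proof; it closely mirrors the argument for Theorem~\ref{th.gopi}, but is simpler because the hard-deadline constraint is linear in the quantities $w_i/\spe_i$.

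\medskip\noindent\textbf{Proof plan.} The plan is to replay the exchange argument of Theorem~\ref{th.gopi}. I would assume, for contradiction, that an optimal solution to the \singles\ \hd\ problem contains two chunks $(w_1,\spe_1)$ and $(w_2,\spe_2)$ with either $\spe_1\neq\spe_2$, or $\spe_1=\spe_2$ and $w_1\neq w_2$, and (without loss of generality) $\frac{w_1}{\spe_1}\geq\frac{w_2}{\spe_2}$. I would then replace these two chunks by two chunks of common size $w=\frac12(w_1+w_2)$ executed at a common speed $\spe$ to be chosen, keeping all other chunks unchanged, and argue that the new solution still meets the deadline while consuming strictly less expected energy.

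First I would deal with feasibility. In the \singles\ \hd\ model the worst-case time contributed by these two chunks is $2\frac{w_1}{\spe_1}+2\frac{w_2}{\spe_2}+4\tc$ before the exchange and $4\frac{w}{\spe}+4\tc$ after it; the checkpoint terms cancel exactly, so it suffices to guarantee $\frac{w_1}{\spe_1}+\frac{w_2}{\spe_2}\geq\frac{2w}{\spe}$. Since $\wtps$ has no quadratic term, there is no analogue of the auxiliary threshold $\spe_B$ used in the \ed\ proof: I would simply set $\spe=\spe_A:=\frac{w_1+w_2}{\frac{w_1}{\spe_1}+\frac{w_2}{\spe_2}}$, for which the worst-case time is preserved exactly.

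Then I would bound the energy. The expected energy is the same function of $(w_i,\spe_i)$ as in the divisible case with $\spr_i=\spe_i$, so the energy difference decomposes exactly as in Equation~\eqref{eq.energy.blou}: it equals $(w_1\spe_1^2+w_2\spe_2^2-2w\spe^2)(1+\lambda\tc)+\lambda\ec\bigl(\frac{w_1}{\spe_1}+\frac{w_2}{\spe_2}-\frac{2w}{\spe}\bigr)+\lambda(w_1^2\spe_1+w_2^2\spe_2-2w^2\spe)$. With $\spe=\spe_A$ the middle term vanishes, and the first and third terms are nonnegative by the very same algebraic computations as in Theorem~\ref{th.gopi}: the first reduces, up to a positive factor, to $w_1w_2^2\,g(\spe_1/\spe_2)+w_1^2w_2\,g(\spe_2/\spe_1)$ with $g(u)=u^2+\frac2u-3\geq0$, and the third to an expression bounded below by $3w_2^3(\spe_1/\spe_2-1)^2(\spe_1/\spe_2+1)\geq0$. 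For strictness: if $\spe_1\neq\spe_2$ then $g(\spe_1/\spe_2)>0$ makes the first term strictly positive; if $\spe_1=\spe_2$ but $w_1\neq w_2$ then $\spe_A=\spe_1$ and the third term is $\lambda\spe_1(w_1^2+w_2^2-2w^2)=\tfrac{\lambda\spe_1}{2}(w_1-w_2)^2>0$. In either case the new solution is feasible and strictly cheaper, contradicting optimality; applying this first to speeds and then to sizes yields equal speed and size $W/n$.

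\medskip\noindent\textbf{Main obstacle.} I do not expect a genuine difficulty: the statement is a strict simplification of Theorem~\ref{th.gopi}. The only points needing care are verifying that the $\tc$ terms in $\wtps$ cancel exactly under the exchange (so $\spe=\spe_A$ keeps the deadline satisfied, not merely approximately), and noting that removing the quadratic time term only removes the need for $\spe_B$ while leaving all of the energy estimates — and hence the strictness dichotomy — intact.
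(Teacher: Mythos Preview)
Your proposal is correct and follows essentially the same approach as the paper. The paper's own proof is a one-line remark that the argument is identical to that of Theorem~\ref{th.gopi} ``except we do not need to study the case where $\spe_B > \spe_A$'', which is exactly what you observed: the hard-deadline constraint is linear in the $w_i/\spe_i$, so the auxiliary speed $\spe_B$ never enters, $\spe=\spe_A$ preserves feasibility, and the energy inequalities (and their strictness dichotomy) carry over verbatim from the \ed\ case.
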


\begin{proof}
The proof is similar to the one of Theorem~\ref{th.gopi}, except we do not need to study the 
case where $\spe_B > \spe_A$.
\end{proof}

\subsection{Multiple speeds model}

\subsubsection{Expected deadline}
In this section, we still deal with the problem of a divisible task of size $W$ 
 that we can split into an arbitrary number of chunks, but using  the more general \multis model.  We start
by proving that all re-execution speeds are equal: 

\RRme{
Let us first introduce a preliminary Lemma:
\begin{lemma}[Convexity \multic]
	\label{conv.tight.div}
The problem of minimizing $A_0 + \alpha_0 x_0^2 + \alpha_1 x_1^2$ under the constraint $\frac{\alpha_0}{x_0} +\frac{\alpha_1}{x_1} \leq A_1$
where $A_0$ is a constant, and $A_1, \alpha_0, \alpha_1$ are positive constants, is solved 
when $x_0 = x_1$, and when the constraint is tight: $\frac{\alpha_0}{x_0} +\frac{\alpha_1}{x_1} = A_1$.
\end{lemma}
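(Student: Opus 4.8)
The statement to prove is the convexity/structure lemma: minimizing $A_0 + \alpha_0 x_0^2 + \alpha_1 x_1^2$ subject to $\frac{\alpha_0}{x_0} + \frac{\alpha_1}{x_1} \le A_1$ is solved by taking $x_0 = x_1$ with the constraint tight. I would split this into two independent assertions: (i) at the optimum the constraint is tight, and (ii) among feasible points with the constraint tight, the objective is minimized when $x_0 = x_1$. For (i), observe that the objective $A_0 + \alpha_0 x_0^2 + \alpha_1 x_1^2$ is strictly increasing in each of $x_0, x_1$ on $\mathbb{R}^\star_+$, while the constraint function $\frac{\alpha_0}{x_0} + \frac{\alpha_1}{x_1}$ is strictly decreasing in each variable (and tends to $+\infty$ as either variable tends to $0$). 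Hence, given any feasible point with slack, we could decrease $x_0$ (or $x_1$) slightly, keeping feasibility and strictly decreasing the objective; so at an optimum the constraint must be tight. This is essentially the same argument as in Lemma~\ref{conv.tight.undiv}, applied coordinatewise.

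For (ii), I would parametrize the tight constraint by introducing $u_0 = \frac{\alpha_0}{x_0}$ and $u_1 = \frac{\alpha_1}{x_1}$, so $x_i = \alpha_i / u_i$ and the constraint is simply the linear relation $u_0 + u_1 = A_1$ with $u_0, u_1 > 0$. The objective becomes $A_0 + \frac{\alpha_0^3}{u_0^2} + \frac{\alpha_1^3}{u_1^2}$, which I must minimize over the segment $u_0 + u_1 = A_1$. Substituting $u_1 = A_1 - u_0$ gives a one-variable function $\phi(u_0) = \frac{\alpha_0^3}{u_0^2} + \frac{\alpha_1^3}{(A_1-u_0)^2}$ on $(0, A_1)$; since each term is strictly convex in $u_0$ on this interval (second derivative $\frac{6\alpha_0^3}{u_0^4} > 0$, resp.\ $\frac{6\alpha_1^3}{(A_1-u_0)^4} > 0$), $\phi$ is strictly convex and has a unique minimizer, characterized by $\phi'(u_0) = 0$, i.e.\ $\frac{\alpha_0^3}{u_0^3} = \frac{\alpha_1^3}{u_1^3}$, which reads $\frac{\alpha_0}{u_0} = \frac{\alpha_1}{u_1}$, i.e.\ $x_0 = x_1$. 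Translating back, the unique optimal point of the tight problem has $x_0 = x_1$, and combined with (i) this proves the lemma.

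A cleaner alternative for (ii) that avoids even the one-variable calculus: note $x_i = \alpha_i/u_i$, so $\alpha_i x_i^2 = \alpha_i^3/u_i^2$, and the map $u \mapsto \alpha^3/u^2$ is strictly convex on $\mathbb{R}^\star_+$. Then by a swapping/averaging argument identical in spirit to the proof of Theorem~\ref{th.gopi}: if $x_0 \ne x_1$ (equivalently $\frac{\alpha_0}{u_0} \ne \frac{\alpha_1}{u_1}$) one can replace $(u_0,u_1)$ by a pair that keeps the sum $A_1$ fixed but equalizes $\alpha_i/u_i$, and strict convexity shows the objective strictly decreases; I would just verify that equalizing is achievable within the feasible region $u_0, u_1 > 0$ (it is, since $u_i \mapsto \alpha_i/u_i$ is a continuous bijection onto $\mathbb{R}^\star_+$).

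**Main obstacle.** There is no deep obstacle here — the lemma is a routine convex-optimization fact. The only thing requiring mild care is the bookkeeping in step (ii): making sure the change of variables is legitimate (all quantities stay positive), and being precise about strictness, so that the conclusion is that $x_0 = x_1$ is forced at \emph{the} optimum rather than merely allowed. If one prefers the Lagrangian route instead, the obstacle shifts slightly to checking that the KKT stationarity conditions $2\alpha_i x_i = \mu \alpha_i / x_i^2$ (for a common multiplier $\mu > 0$) force $x_i^3 = \mu/2$ independent of $i$, hence $x_0 = x_1$; this is quick but requires first justifying that the constraint is active (step (i)) to know the multiplier is nonzero.
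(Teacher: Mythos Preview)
Your proof is correct and follows essentially the same approach as the paper: first argue tightness (the paper does this by fixing $x_1$ and invoking Lemma~\ref{conv.tight.undiv}, you do it coordinatewise, which is equivalent), then reduce the tight problem to a single-variable strictly convex minimization and read off $x_0=x_1$ from the first-order condition. The only cosmetic difference is that you first substitute $u_i=\alpha_i/x_i$ to linearize the constraint before eliminating one variable, whereas the paper substitutes $x_0=\alpha_0/(A_1-\alpha_1/x_1)$ directly; the resulting derivative computations and conclusion are the same.
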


\begin{proof}
First remark that when $x_1$ is fixed, then according to Lemma~\ref{conv.tight.undiv}, the constraint 
should be tight. Hence this is true for the optimal solution (any optimal solution when the constraint is
not tight can be improved by reducing one of the variables).

To prove the result now that we know that the constraint is tight, it suffices to replace in the function 
we wish to minimize, $x_0 = \frac{\alpha_0}{A_1 - \frac{\alpha_1}{x_1}}$.
Differentiating $A_0 + \alpha_0 \times \left ( \frac{\alpha_0}{A_1 - \frac{\alpha_1}{x_1}}\right )^2 + \alpha_1 x_1^2$
with respect to $x_1$ gives $-\frac{2\alpha_1 \alpha_0^3}{x_1^2 \left (A_1 - \frac{\alpha_1}{x_1} \right )^3} + 2\alpha_1 x_1$.
Then we obtain that the equation is minimized (by differentiating again, we can see that the function 
is convex) when $-\frac{2\alpha_1 \alpha_0^3}{x_1^2 \left (A_1 - \frac{\alpha_1}{x_1} \right )^3} + 2\alpha_1 x_1 = 0$,
that is $-x_0 + x_1 = 0$, hence the result.
\end{proof}
Note that if $A_1$ is nonpositive, then there is no solution.
}{}

\begin{lemma}
\label{lemma.samereexec}
  In the \multis model,  all
  re-execution speeds are equal in the optimal solution: $\exists \spr, \forall i, \spr_i=\spr$, and 
  the deadline is tight.
\end{lemma}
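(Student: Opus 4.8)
The plan is to mimic the structure used for Theorem~\ref{th.gopi}, but now with the re-execution speeds as the free variables, and to invoke the preliminary convexity lemma (Lemma~\ref{conv.tight.div}) that has just been introduced. First I would fix the number of chunks $n$, the chunk sizes $w_i$, and the first-execution speeds $\spe_i$; under these fixed quantities, observe that both the objective $\ex{\energy}$ and the deadline constraint $\ex{\tps} \leq \dl$ depend on the re-execution speeds $\spr_i$ in a very structured way. Indeed, writing $\alpha_i = \lambda (w_i/\spe_i + \tc) w_i$, the part of $\ex{\energy}$ that involves the $\spr_i$ is $\sum_i \alpha_i \spr_i^2$ plus a constant (the $w_i\spe_i^2 + \ec$ terms and the $\lambda(w_i/\spe_i+\tc)\ec$ terms), while the part of $\ex{\tps}$ that involves the $\spr_i$ is $\sum_i (\alpha_i/\spr_i) \cdot (1/w_i) \cdot w_i = \sum_i \lambda(w_i/\spe_i+\tc) w_i / \spr_i$, again with the rest constant. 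So the sub-problem of choosing the $\spr_i$ is exactly of the form ``minimize $A_0 + \sum_i \alpha_i x_i^2$ subject to $\sum_i \beta_i/x_i \leq A_1$'' with positive $\alpha_i$.

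Next I would note that Lemma~\ref{conv.tight.div} is stated for two variables $x_0, x_1$, so I would either apply it pairwise or state the obvious generalization to $n$ variables: by the same argument (the constraint must be tight by Lemma~\ref{conv.tight.undiv} applied coordinate-wise, and then a Lagrange-multiplier / pairwise-exchange computation shows all optimal $x_i$ coincide), the optimum of this sub-problem is attained when all the $\spr_i$ are equal and the constraint is tight. Actually the cleanest route is pairwise: if in an optimal solution $\spr_i \neq \spr_j$ for some pair, apply the two-variable Lemma~\ref{conv.tight.div} to that pair (freezing everything else, folding all other terms into $A_0$ and $A_1$), obtaining a strictly better solution with $\spr_i = \spr_j$ — contradiction. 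Hence $\spr_i = \spr$ for all $i$. The tightness of the deadline then follows either as the last clause of Lemma~\ref{conv.tight.div} or, more directly, because with all $\spr_i$ equal the constraint reads $C + (\sum_i \beta_i)/\spr \leq \dl$ with the left side strictly decreasing in $\spr$, and $\ex{\energy}$ strictly increasing in $\spr$, so one can always decrease $\spr$ until equality without violating feasibility, strictly lowering the energy unless already tight.

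One subtlety to handle: the quantities $\alpha_i = \lambda(w_i/\spe_i+\tc)w_i$ are indeed strictly positive whenever $w_i > 0$ and $\lambda > 0$, so Lemma~\ref{conv.tight.div}'s hypothesis that the $\alpha$-coefficients be positive is met; I would state this explicitly. I should also address the trivial degenerate cases ($n = 1$, where there is nothing to prove; or a chunk of size $0$, which can be discarded). I expect the main obstacle to be purely expository rather than mathematical: making precise that Lemma~\ref{conv.tight.div}, as stated for two variables, suffices for the $n$-variable claim — the pairwise-exchange framing resolves this cleanly, but it must be spelled out, together with the observation that after equalizing one pair the remaining structure is unchanged, so finitely many exchanges terminate (or, more slickly, invoke that the global optimum exists by continuity/coercivity and must be a fixed point of every pairwise exchange). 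Once that bookkeeping is done, the substantive content is entirely carried by the already-proved lemmas, so the remainder is routine.
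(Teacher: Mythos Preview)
Your proposal is correct and follows essentially the same approach as the paper: fix the $w_i$ and $\spe_i$, observe that the energy and the time constraint depend on the $\spr_i$ through $\sum_i \alpha_i \spr_i^2$ and $\sum_i \alpha_i/\spr_i$ respectively with the \emph{same} coefficients $\alpha_i = \lambda(w_i/\spe_i+\tc)w_i$, and then invoke Lemma~\ref{conv.tight.div}. The only difference is that the paper applies Lemma~\ref{conv.tight.div} directly to all $n$ variables without comment, whereas you (correctly) flag that the lemma is stated for two variables and spell out the pairwise-exchange argument that bridges the gap; this extra care is warranted but does not change the substance of the proof.
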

\RRme{
\begin{proof}
This is a direct corollary of Lemma~\ref{conv.tight.div}.
  If we consider the $w_i$ and $\spe_i$ to be fixed, then we can write
  $\ex{\tps}(\cup_i (w_i,\spe_i,\spr_i)) = \tps_0 + \sum_i
  \frac{\alpha_i}{\spr_i}$, and $\ex{\energy}(\cup_i (w_i,\spe_i,\spr_i)) =
  \energy_0 + \sum_i \alpha_i \spr_i^2$, where $\tps_0$, $\energy_0$
  and $\alpha_i$ are constant.  Assuming $\dl-\tps_0 > 0$ (otherwise there is no solution),
  we can apply Lemma~\ref{conv.tight.div}, then the problem is minimized
  when the deadline is tight, and when for all $i$, $\spr_i = \frac{\sum_i \alpha_i}{\dl - T_0}$.
\end{proof}
}{
Due to lack of space, the proof is available in the companion research report~\cite{rr-abmrr}.
}

We can now redefine 
\begin{align*}
\ex{\tps}(\cup_i (w_i,\spe_i,\spr_i)) &= \tps(\cup_i (w_i,\spe_i),\spr) \\
\ex{\energy}(\cup_i (w_i,\spe_i,\spr_i)) &= \energy(\cup_i (w_i,\spe_i),\spr)
\end{align*}

\begin{theorem}
  In the \multis model, all
  chunks have the same size $w_i=\frac{W}{n}$, and are executed at the same speed $\spe$, 
  in the optimal solution.
 \end{theorem}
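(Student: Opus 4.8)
The plan is to mimic the proof of Theorem~\ref{th.gopi}, but now with the re-execution speed handled first via Lemma~\ref{lemma.samereexec}. By that lemma we already know that in an optimal solution all re-execution speeds share a common value $\spr$ and the deadline is tight. So I would fix this common $\spr$ and argue by contradiction: suppose the optimal solution uses two chunks $(w_1,\spe_1)$ and $(w_2,\spe_2)$ with either $\spe_1\neq\spe_2$, or $\spe_1=\spe_2$ but $w_1\neq w_2$; assume without loss of generality $\frac{w_1}{\spe_1}\geq\frac{w_2}{\spe_2}$. I would replace these two chunks by two chunks of equal size $w=\frac12(w_1+w_2)$ executed at a common speed $\spe$ to be chosen, keeping all other chunks unchanged, and show this does not increase $\ex{\tps}$ while strictly decreasing $\ex{\energy}$ --- contradicting optimality. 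Note that merging two chunks into two equal-sized chunks keeps the number of chunks and the total checkpoint count unchanged, so the $\tc$ and $\ec$ bookkeeping matches exactly as in Theorem~\ref{th.gopi}.

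The key computations are: writing $\ex{\tps}$ for the pair as $\frac{w_1}{\spe_1}+\frac{w_2}{\spe_2}+2\tc+\lambda\bigl(\frac{w_1}{\spe_1}+\tc\bigr)\bigl(\frac{w_1}{\spr}+\tc\bigr)+\lambda\bigl(\frac{w_2}{\spe_2}+\tc\bigr)\bigl(\frac{w_2}{\spr}+\tc\bigr)$ and the corresponding energy with the same $\spr$; then taking differences against the equal-size pair. The re-execution terms now contain both $\frac{w_i}{\spe_i}$ and $\frac{w_i}{\spr}$; since $w_1+w_2=2w$ is preserved, the $\frac{w_i}{\spr}$ contributions telescope cleanly, and the expressions reduce (up to the linear-in-$w$ and linear-in-$w^2$ terms) to combinations of $\frac{w_i}{\spe_i}$, $w_i\spe_i^2$, $w_i^2\spe_i$ and $\frac{1}{\spe_i}$. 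I would define $\spe_A$ and $\spe_B$ exactly as in Theorem~\ref{th.gopi} (the speed making the sum $\sum\frac{w_i}{\spe_i}$, resp.\ $\sum\bigl(\frac{w_i}{\spe_i}\bigr)^2$, match), set $\spe=\max(\spe_A,\spe_B)$, re-use the inequality $\spe_A\geq\spe_B$ so that $\spe=\spe_A$, and then re-use the two key lemma-free facts already established in the proof of Theorem~\ref{th.gopi}: $w_1\spe_1^2+w_2\spe_2^2-2w\spe_A^2\geq0$ (via the auxiliary function $g(u)=u^2+\frac2u-3\geq0$) and $w_1^2\spe_1+w_2^2\spe_2-2w^2\spe_A\geq0$ (via $(u-1)^2(u+1)\geq0$). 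Combined with $\sum\frac{w_i}{\spe_i}-\frac{2w}{\spe_A}=0$ and $\lambda\ec\bigl(\sum\frac1{\spe_i}-\frac2\spe\bigr)\geq 0$ plus the preserved $\spr$-terms, every contribution to the energy difference is nonnegative.

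The one genuinely new piece is checking that the extra terms involving $\spr$ --- namely $\lambda\bigl(\frac{w_1}{\spe_1}+\tc\bigr)\bigl(\frac{w_1}{\spr}+\tc\bigr)+\lambda\bigl(\frac{w_2}{\spe_2}+\tc\bigr)\bigl(\frac{w_2}{\spr}+\tc\bigr)$ in $\ex{\tps}$, and $\lambda\bigl(\frac{w_1}{\spe_1}+\tc\bigr)\ec+\lambda\bigl(\frac{w_2}{\spe_2}+\tc\bigr)\ec$ etc.\ in $\ex{\energy}$ --- still go the right way. Expanding, the genuinely $\spr$-dependent parts are $\frac{\lambda}{\spr}\bigl(\frac{w_1^2}{\spe_1}+\frac{w_2^2}{\spe_2}\bigr)$ and $\lambda\tc\bigl(\frac{w_1}{\spr}+\frac{w_2}{\spr}\bigr)$ for time, and for energy $\lambda\bigl(\frac{w_1}{\spe_1}+\frac{w_2}{\spe_2}\bigr)W\spr^2/n$-type terms; since $w_1+w_2$ is preserved the $\tc/\spr$ pieces cancel exactly, and the remaining pieces reduce to the already-handled quantities $\frac{w_i^2}{\spe_i}$ (a convex combination argument identical in spirit to $w_i^2\spe_i$) and $\frac{w_i}{\spe_i}$. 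So I expect the main obstacle to be purely organizational: carefully matching which of the previously proven sign facts covers each new $\spr$-term, and verifying one more convexity-type inequality of the form $\frac{w_1^2}{\spe_1}+\frac{w_2^2}{\spe_2}\geq\frac{2w^2}{\spe_A}$, which follows from the same manipulation as $w_1^2\spe_1+w_2^2\spe_2\geq2w^2\spe_A$ after substituting $\spe_i\mapsto1/\spe_i$. Finally, as in Theorem~\ref{th.gopi}, I would close by noting that if $\spe_1\neq\spe_2$ the energy drop is already strict; otherwise $\spe_1=\spe_2$ forces $w_1>w_2$ and a refined look at $w_1^2\spe_1+w_2^2\spe_2-2w^2\spe_A>0$ gives the strict inequality, yielding the contradiction and hence $w_i=W/n$ with a common execution speed $\spe$.
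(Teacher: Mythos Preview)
Your plan has a genuine gap: the quantity $\spe_B$ from Theorem~\ref{th.gopi} is \emph{not} the right one here. In the \singles\ \ed model the re-execution time contributes a term $\lambda\bigl(\frac{w_i}{\spe_i}\bigr)^2$, which is why Theorem~\ref{th.gopi}'s $\spe_B$ equalises $\sum\bigl(\frac{w_i}{\spe_i}\bigr)^2$. In the \multis\ \ed model, once Lemma~\ref{lemma.samereexec} fixes a common $\spr$, the cross term is $\frac{\lambda}{\spr}\cdot\frac{w_i^2}{\spe_i}$, so the time difference for the two chunks is
\[
(1+\lambda\tc)\Bigl(\tfrac{w_1}{\spe_1}+\tfrac{w_2}{\spe_2}-\tfrac{2w}{\spe}\Bigr)
+\tfrac{\lambda}{\spr}\Bigl(\tfrac{w_1^2}{\spe_1}+\tfrac{w_2^2}{\spe_2}-\tfrac{2w^2}{\spe}\Bigr).
\]
The correct $\spe_B$ is therefore $\spe_B=\frac{2w^2}{\,w_1^2/\spe_1+w_2^2/\spe_2\,}$, and with $\spe=\spe_A$ the second bracket equals $\tfrac12(w_1-w_2)\bigl(\tfrac{w_1}{\spe_1}-\tfrac{w_2}{\spe_2}\bigr)$. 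Under your WLOG $\frac{w_1}{\spe_1}\geq\frac{w_2}{\spe_2}$ this is \emph{negative} whenever $w_1<w_2$, so taking $\spe=\spe_A$ can make the new pair infeasible. Your proposed fix (``substitute $\spe_i\mapsto 1/\spe_i$ in the Theorem~\ref{th.gopi} inequality $w_1^2\spe_1+w_2^2\spe_2\geq 2w^2\spe_A$'') does not work, because $\spe_A$ itself changes under that substitution; the inequality $\frac{w_1^2}{\spe_1}+\frac{w_2^2}{\spe_2}\geq\frac{2w^2}{\spe_A}$ is simply false in general (it is equivalent to $\spe_A\geq\spe_B$).

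The paper's proof confronts this head-on: it defines the new $\spe_B$ as above, sets $\spe=\max(\spe_A,\spe_B)$, and treats the two cases separately. When $\spe_A>\spe_B$ the argument is close to Theorem~\ref{th.gopi} (using the same $g(u)=u^2+\tfrac{2}{u}-3\geq 0$). When $\spe_A\leq\spe_B$ --- equivalently $(w_2-w_1)\bigl(\tfrac{w_2}{\spe_2}-\tfrac{w_1}{\spe_1}\bigr)\leq 0$ --- one must instead show $w_1\spe_1^2+w_2\spe_2^2-2w\spe_B^2>0$, which requires a genuinely new polynomial computation (the paper factors the relevant expression and bounds it below by $3w_2^5(u-1)^2(u+1)^3$ with $u=\spe_1/\spe_2$). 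This second case is the missing idea in your outline; without it the contradiction argument does not close. Also note that the energy difference here involves $\lambda\spr^2\bigl(\tfrac{w_1^2}{\spe_1}+\tfrac{w_2^2}{\spe_2}-\tfrac{2w^2}{\spe}\bigr)$, not $w_i^2\spe_i$, so the second ``key fact'' you cite from Theorem~\ref{th.gopi} is not the one you actually need.
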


\RRme{
\begin{proof}
  We first prove that chunks are of equal size. Assume first, by contradiction, that the
  optimal solution has two chunks of different sizes, for instance $w_1
  < w_2$. These chunks are executed at speeds $\spe_1$
  and~$\spe_2$. Thanks to Lemma~\ref{lemma.samereexec}, both chunks
  are re-executed at a same speed~$\spr$.
  We consider the solution with two chunks of size
  $w=\frac{1}{2}(w_1+w_2)$, executed at a same speed~$\spe$ (to
  be defined later), and re-executed at speed~$\spr$ (the value of the
  re-execution speed in the optimal solution). The size and speed of
  the other chunks are kept the same. We show that 
   the execution time  is not greater
  than in the optimal solution, while the energy consumption is
  strictly smaller, hence leading to the contradiction.

We have seen that
\RRme{
\begin{align*}
\ex{\tps}((w_1,\spe_1),(w_2,\spe_2),\spr) &=  \frac{w_1}{\spe_1} +\tc + \frac{w_2}{\spe_2} +\tc +
\lambda\left ( \frac{w_1}{\spe_1} + \tc\right )\left ( \frac{w_1}{\spr} + \tc\right )+
\lambda\left ( \frac{w_2}{\spe_2} + \tc\right )\left ( \frac{w_2}{\spr} + \tc\right ) \\
\ex{\tps}((w,\spe),(w,\spe),\spr) &=  2\left (\frac{w}{\spe} +\tc \right )+ 
2\lambda\left ( \frac{w}{\spe} + \tc\right )\left ( \frac{w}{\spr} + \tc\right )
\end{align*}
}{
\begin{align*}
\ex{\tps}&((w_1,\spe_1),(w_2,\spe_2),\spr) \\  
=&\frac{w_1}{\spe_1} +\tc + \frac{w_2}{\spe_2} +\tc \\
+&\lambda\!\left (\!\frac{w_1}{\spe_1}\! +\!\tc\right )\!\left (\!\frac{w_1}{\spr}\!+\!\tc\right )\!+\!
\lambda\!\left (\! \frac{w_2}{\spe_2}\! +\! \tc\right )\!\left ( \!\frac{w_2}{\spr}\! +\! \tc\right ) \\
\ex{\tps}&((w,\spe),(w,\spe),\spr) \\
=& 2\left (\frac{w}{\spe} +\tc \right )+ 
2\lambda\left ( \frac{w}{\spe} + \tc\right )\left ( \frac{w}{\spr} + \tc\right )
\end{align*}}
Hence,
\RRme{
\begin{align*}
\ex{\tps}((w_1,\spe_1),(w_2,\spe_2),\spr)  -\ex{\tps}((w,\spe),(w,\spe),\spr) &=  
 \left ( 1 +\lambda \tc \right ) \left (\frac{w_1}{\spe_1} + \frac{w_2}{\spe_2} - \frac{2w}{\spe} \right )+
\frac{\lambda}{\spr} \left ( \frac{w_1^2}{\spe_1} + \frac{w_2^2}{\spe_2} - \frac{2w^2}{\spe}\right )
\end{align*}
}{
\begin{align*}
\ex{\tps}&((w_1,\spe_1),(w_2,\spe_2),\spr)  -\ex{\tps}((w,\spe),(w,\spe),\spr) \\
=&\left (\!1\! +\!\lambda\tc \!\right )\! \left (\!\frac{w_1}{\spe_1}\! +\! \frac{w_2}{\spe_2}\! -\! \frac{2w}{\spe}\! \right )\!+\!
\frac{\lambda}{\spr}\! \left (\! \frac{w_1^2}{\spe_1}\! +\! \frac{w_2^2}{\spe_2}\! -\! \frac{2w^2}{\spe}\!\right )
\end{align*}}
Similarly, 
we know that:
\RRme{
\begin{align*}
\ex{\energy}((w_1,\spe_1),(w_2,\spe_2),\spr) &=  w_1\spe_1^2 +\ec + w_2\spe_2^2 +\ec +
\lambda\left ( \frac{w_1}{\spe_1} + \tc\right )\left ( w_1\spr^2 + \ec\right )+
\lambda\left ( \frac{w_2}{\spe_2} + \tc\right )\left ( w_2 \spr^2 + \ec\right ) \\
\ex{\energy}((w,\spe),(w,\spe),\spr) &=  2\left (w\spe^2 +\ec \right )+ 
2\lambda\left ( \frac{w}{\spe} + \tc\right )\left ( w\spr^2 + \ec\right )
\end{align*}
}{
\begin{align*}
\ex{\energy}&((w_1,\spe_1),(w_2,\spe_2),\spr)\\
 =&  w_1\spe_1^2 \!+\!\ec\! +\! w_2\spe_2^2\! +\!\ec\! \\
 +&\lambda\!\left (\! \frac{w_1}{\spe_1}\! +\! \tc\!\right )\!\left ( \!w_1\spr^2\! +\! \ec\!\right )\!+\!
\lambda\!\left ( \!\frac{w_2}{\spe_2}\! +\! \tc\!\right )\!\left ( \!w_2 \spr^2\! + \!\ec\!\right ) \\
\ex{\energy}&((w,\spe),(w,\spe),\spr) \\
=&  2\left (w\spe^2 +\ec \right )+ 
2\lambda\left ( \frac{w}{\spe} + \tc\right )\left ( w\spr^2 + \ec\right )
\end{align*}}
and deduce
\RRme{
\begin{align}
	\label{eq.energy.blibla}
\ex{\energy}((w_1,\spe_1),(w_2,\spe_2),\spr)&
-\ex{\energy}((w,\spe),(w,\spe),\spr) = \nonumber \\ 
& \left (w_1 \spe_1^2 + w_2\spe_2^2 - 2w\spe^2 \right )
+ \lambda \ec \left(\frac{w_1}{\spe_1} + \frac{w_2}{\spe_2} - \frac{2w}{\spe} \right)
+ \lambda \spr^2 \left ( \frac{w_1^2}{\spe_1} + \frac{w_2^2}{\spe_2} - \frac{2w^2}{\spe}\right )
\end{align}
}{
\begin{align}
	\label{eq.energy.blibla}
\ex{\energy}&((w_1,\spe_1),(w_2,\spe_2),\spr)
-\ex{\energy}((w,\spe),(w,\spe),\spr)  \nonumber \\ 
&= \left (w_1 \spe_1^2 + w_2\spe_2^2 - 2w\spe^2 \right )
+ \lambda \ec\! \left(\frac{w_1}{\spe_1} + \frac{w_2}{\spe_2} - \frac{2w}{\spe} \right) \nonumber\\
&+ \lambda \spr^2 \left ( \frac{w_1^2}{\spe_1} + \frac{w_2^2}{\spe_2} - \frac{2w^2}{\spe}\right )
\end{align}
}

Let us now define
\begin{align*}
\spe_A &= \frac{2w}{\frac{w_1}{\spe_1} + \frac{w_2}{\spe_2}} = \frac{w_1 + w_2}{\frac{w_1}{\spe_1} + \frac{w_2}{\spe_2}}\\
\spe_B &=  \frac{2w^2}{\frac{w_1^2}{\spe_1} + \frac{w_2^2}{\spe_2}} =  \frac12 \frac{(w_1+w_2)^2}{\frac{w_1^2}{\spe_1} + \frac{w_2^2}{\spe_2}}  
\end{align*}

We then fix $\spe = \max ( \spe_A, \spe_B)$. Then, since $\spe \geq
\spe_A$, we have $\frac{w_1}{\spe_1} + \frac{w_2}{\spe_2} -
\frac{2w}{\spe} \geq 0$, and since $\spe \geq \spe_B$, we have
$\frac{w_1^2}{\spe_1} + \frac{w_2^2}{\spe_2} - \frac{2w^2}{\spe} \geq
0$. This ensures that $\ex{\tps}((w_1,\spe_1),(w_2,\spe_2),\spr)
-\ex{\tps}((w,\spe),(w,\spe),\spr) \geq 0$. To prove that
$\ex{\energy}((w_1,\spe_1),(w_2,\spe_2),\spr)
-\ex{\energy}((w,\spe),(w,\spe),\spr) \geq 0$, there remains to show that 
$w_1 \spe_1^2 + w_2\spe_2^2 - 2w\spe^2 \geq 0$.
 %
 %

\paragraph{Let us first suppose that $\spe_A > \spe_B$}
Then we have $\spe=\spe_A$, and
let us show that $ w_1 \spe_1^2 + w_2\spe_2^2 - 2w\spe_A^2 \geq 0$: 
\RRme{
\begin{align*}
\left (\frac{w_1}{\spe_1} + \frac{w_2}{\spe_2}\right )^2 & \left (w_1 \spe_1^2 + w_2\spe_2^2 - (w_1 + w_2)\left( \frac{w_1 + w_2}{\frac{w_1}{\spe_1} + \frac{w_2}{\spe_2}} \right)^2  \right ) \\
=&w_1^3 + w_1w_2^2\left ( \frac{\spe_1}{\spe_2}\right )^2 + 2\frac{w_1w_2}{\spe_1 \spe_2}w_1\spe_1^2 +w_2^3 + w_2w_1^2\left ( \frac{\spe_2}{\spe_1}\right )^2 + 2\frac{w_2w_1}{\spe_1 \spe_2}w_2\spe_2^2  - \left ( w_1 + w_2\right )^3 \\
=& w_1w_2^2 \left ( \left ( \frac{\spe_1}{\spe_2}\right )^2 +2\frac{\spe_2}{\spe_1} -3 \right ) + w_1^2w_2 \left ( \left ( \frac{\spe_2}{\spe_1}\right )^2 +2\frac{\spe_1}{\spe_2} -3 \right ) \\
=& w_1w_2^2 g\!\!\left(\frac{\spe_1}{\spe_2}\right) + w_1^2w_2 g\!\!\left(\frac{\spe_2}{\spe_1}\right)
\end{align*}
}{
\begin{align*}
&\left (\frac{w_1}{\spe_1} + \frac{w_2}{\spe_2}\right )^2\! \left (w_1 \spe_1^2 + w_2\spe_2^2 - (w_1 + w_2)\left( \frac{w_1 + w_2}{\frac{w_1}{\spe_1} + \frac{w_2}{\spe_2}} \right)^2  \right ) \\
&\quad = w_1^3 + w_1w_2^2\left ( \frac{\spe_1}{\spe_2}\right )^2 + 2\frac{w_1w_2}{\spe_1 \spe_2}w_1\spe_1^2 +w_2^3 \\
 &\quad \quad + w_2w_1^2\left ( \frac{\spe_2}{\spe_1}\right )^2 + 2\frac{w_2w_1}{\spe_1 \spe_2}w_2\spe_2^2  - \left ( w_1 + w_2\right )^3 \\
&\quad = w_1w_2^2 \! \left (\! \left ( \!\frac{\spe_1}{\spe_2}\right )^2\! +\!2\frac{\spe_2}{\spe_1}\! -\!3\! \right )\! 
+\! w_1^2w_2\! \left (\! \left ( \!\frac{\spe_2}{\spe_1}\!\right )^2\! +\!2\frac{\spe_1}{\spe_2} \!-\!3\! \right ) \\
&\quad = w_1w_2^2 g\!\!\left(\frac{\spe_1}{\spe_2}\right) + w_1^2w_2 g\!\!\left(\frac{\spe_2}{\spe_1}\right)
\end{align*}}
where $g: u \mapsto u^2 +\frac{2}{u} -3$. We know from the proof of Theorem~\ref{th.gopi}
that $g$ is positive on 
$\mathbb{R}^{\star}_+$,  hence  $w_1 \spe_1^2 +
w_2\spe_2^2 - 2w\spe_A^2 \geq 0$. 

Finally, since $\spe > \spe_B$, we have
$\frac{w_1^2}{\spe_1} + \frac{w_2^2}{\spe_2} - \frac{2w^2}{\spe} > 0$,
and all other terms of 
$\ex{\energy}((w_1,\spe_1),(w_2,\spe_2),\spr)
-\ex{\energy}((w,\spe_A),(w,\spe_A),\spr)$ are non-negative, hence proving
that the new solution is strictly better than the optimal one, and
leading to a contradiction. 

\paragraph{Let us now suppose that $\spe_A \leq \spe_B$}
Then we have $\spe=\spe_B$. Moreover, we have 
$(w_2- w_1)(\frac{w_2}{\spe_2} - \frac{w_1}{\spe_1}) \leq 0$ (this
comes directly from $\spe_A \leq \spe_B$), and since we
assume that $w_2 > w_1$, 
$\frac{w_2}{\spe_2} - \frac{w_1}{\spe_1} \leq 0$. 
%
Let us show that $w_1 \spe_1^2 + w_2\spe_2^2 - 2w\spe_B^2 > 0$: 

\begin{align*}
4\left (\frac{w^2_1}{\spe_1} + \frac{w^2_2}{\spe_2}\right )^2 &\times \left (w_1 \spe_1^2 + w_2\spe_2^2 - (w_1 + w_2)\left( \frac12 \frac{(w_1 + w_2)^2}{\frac{w^2_1}{\spe_1} + \frac{w^2_2}{\spe_2}} \right)^2  \right ) \\
=& 4w_1^5 + 8w_1^3w_2^2 \frac{\spe_1}{\spe_2} + 4 w_1w_2^4\left ( \frac{\spe_1}{\spe_2}\right )^2 +4w_2^5 + 8w_1^2w_2^3 \frac{\spe_2}{\spe_1} + 4 w_1^4w_2\left ( \frac{\spe_2}{\spe_1}\right )^2 - \left ( w_1 + w_2\right )^5 \\
= & 3\left (w_1^5 + w_2^5 \right ) + w_1^3 w_2^2 \left ( 8\frac{\spe_1}{\spe_2} - 10\right ) + w_2^3 w_1^2 \left ( 8\frac{\spe_2}{\spe_1} - 10\right )\\
&\quad + w_1 w_2^4 \left ( 4\left (\frac{\spe_1}{\spe_2}\right )^2 -5 \right ) + w_1^4 w_2 \left ( 4\left (\frac{\spe_2}{\spe_1}\right )^2 -5 \right )\\
\end{align*}
\begin{align*}
4&\!\left (\!\frac{w^2_1}{\spe_1}\! +\! \frac{w^2_2}{\spe_2}\!\right )^2 \! \left (\!w_1 \spe_1^2\! + \!w_2\spe_2^2\! -\! (w_1\! +\! w_2)\!\left(\! \frac12 \!\frac{(\!w_1\! +\! w_2\!)^2}{\frac{w^2_1}{\spe_1}\! +\! \frac{w^2_2}{\spe_2}}\! \right)^2\!  \right ) \\
&= 4w_1^5 \!+\! 8w_1^3w_2^2 \frac{\spe_1}{\spe_2}\! +\! 4 w_1w_2^4\!\left ( \frac{\spe_1}{\spe_2}\right )^2 \!+\!4w_2^5 \!+\! 8w_1^2w_2^3 \frac{\spe_2}{\spe_1} \\
&\quad + 4 w_1^4w_2\left ( \frac{\spe_2}{\spe_1}\right )^2 - \left ( w_1 + w_2\right )^5 \\
 &= 3\!\left (\!w_1^5\! +\! w_2^5\! \right )\! +\! w_1^3 w_2^2\! \left (\! 8\frac{\spe_1}{\spe_2}\! - \!10\!\right )\! +\! w_2^3 w_1^2\! \left (\! 8\frac{\spe_2}{\spe_1} \!-\! 10\!\right )\\
&\quad + \! w_1 w_2^4\! \left (\! 4\!\left (\!\frac{\spe_1}{\spe_2}\!\right )^2 \!-\!5\! \right )\! +\! w_1^4 w_2 \!\left ( \!4\!\left (\!\frac{\spe_2}{\spe_1}\!\right )^2\! -\!5\! \right )
\end{align*} 
Now because $w_1 \geq \frac{w_2\spe_1}{\spe_2}$, we can bound
the last equation. Let $u=\frac{\spe_1}{\spe_2}$ (and
hence $w_1\geq u\times w_2$):
\begin{align*}
4\left (\frac{w^2_1}{\spe_1} + \frac{w^2_2}{\spe_2}\right )^2 &\times
\left (w_1 \spe_1^2 + w_2\spe_2^2 - (w_1 + w_2)\left( \frac12
    \frac{(w_1 + w_2)^2}{\frac{w^2_1}{\spe_1} + \frac{w^2_2}{\spe_2}}
  \right)^2  \right ) \\
\geq & w_2^5 \left (3\left ( u^5 +1 \right ) + u^3 \left( 8 u -10 \right ) + u^2\left ( 8 \frac{1}{u} -10 \right )+ u \left ( 4u^2 -5\right )+ u^4\left ( 4\frac{1}{u^2} -5\right ) \right ) \\
 = & w_2^5 \left (3u^5 + 3u^4 - 6u^3 -6u^2 +3u +3 \right )\\
 =& 3 w_2^5 \left ( u-1\right )^2 \left (u+1 \right )^3
\end{align*}
\begin{align*}
4&\!\left (\!\frac{w^2_1}{\spe_1}\! +\! \frac{w^2_2}{\spe_2}\!\right )^2 \! \left (\!w_1 \spe_1^2\! + \!w_2\spe_2^2\! -\! (w_1\! +\! w_2)\!\left(\! \frac12 \!\frac{(\!w_1\! +\! w_2\!)^2}{\frac{w^2_1}{\spe_1}\! +\! \frac{w^2_2}{\spe_2}}\! \right)^2\!  \right ) \\
&\geq w_2^5 \!\left (\!3\!\left (\! u^5 \!+\!1\! \right )\! +\! u^3 \!\left(\! 8 u \!-\!10\! \right )\! + u^2\!\left ( \!8 \frac{1}{u} \!-\!10 \!\right ) \right . \\
&\quad \left .+ u\! \left (\! 4u^2\! -\!5\!\right )\!+\! u^4\!\left ( \!4\frac{1}{u^2} \!-\!5\!\right ) \! \right ) \\
 &=  w_2^5 \left (3u^5 + 3u^4 - 6u^3 -6u^2 +3u +3 \right )\\
 &= 3 w_2^5 \left ( u-1\right )^2 \left (u+1 \right )^3
\end{align*} 
Since $w_2>w_1$, $0<u<1$, and this polynomial is strictly positive,
hence we have   
$w_1 \spe_1^2 + w_2\spe_2^2 - 2w\spe_B^2 > 0$.

Finally, we can conclude that in both cases, 
$\ex{\energy}((w_1,\spe_1),(w_2,\spe_2),\spr)
-\ex{\energy}((w,\spe_B),(w,\spe_B),\spr) > 0$, so there exist a better
solution with two chunks of same sizes, hence leading to a
contradiction.

We had proven that all chunks have the same size. We use the same line of reasoning
 to prove that all chunks are executed at a same
speed~$\spe$. If there are two chunks executed
at speeds $\spe_1<\spe_2$ (with $w_1=w_2=w$), then we have
$s_A=s_B$. Considering that $s=s_A$, it is easy to see that $w_1
\spe_1^2 + w_2\spe_2^2 - 2w\spe_A^2 > 0$ since $w_1w_2^2
g\!\!\left(\frac{\spe_1}{\spe_2}\right) + w_1^2w_2
g\!\!\left(\frac{\spe_2}{\spe_1}\right)>0$. Indeed, $g$ is null only
in~$1$, and $\spe_1 \neq \spe_2$. We exhibit a solution strictly
better, hence showing a contradiction. This concludes the proof. 
\end{proof}
}
{Due to lack of space, the proof is available in the companion research report~\cite{rr-abmrr}.
This proof uses the same reasoning as the proof of Theorem~\ref{th.gopi}.}





Thanks to this result, we know that the $n$ chunks problem can be rewritten as follows:
find \spe such that 
\begin{itemize}
	\item $\frac{W}{\spe} + n\tc + \frac{\lambda}{n}\left (\frac{W}{\spe} +n\tc\right )\left (\frac{W}{\spr} +n\tc\right ) = \dl$
	\item in order to minimize $W\spe^2 + n\ec + \frac{\lambda}{n}\left (\frac{W}{\spe} +n\tc\right )\left (W\spr^2 +n\ec\right )$
\end{itemize}

One can see that this reduces to the \singlec \multis \ed task problem where 
\begin{itemize}
	\item $\lambda \leftarrow \frac{\lambda}{n}$
	\item $\tc \leftarrow n\tc$
	\item $\ec \leftarrow n\ec$
\end{itemize}
and allows us to write the problem to solve as a two parameters function:
\RRme{
\begin{equation}
	\label{eq.multis.div}
(n,\spe) \mapsto W \spe^2 +n\ec + \frac{\lambda}{n} \left(  \frac{W}{\spe} + n\tc\right ) \left (W \left (\frac{\frac{\lambda}{n} W}{\frac{\dl}{ \frac{W}{\spe} + n\tc} - (1 + \lambda \tc)} \right )^2 + n\ec \right)
\end{equation}
}{
\begin{align}
	\label{eq.multis.div}
(n,\spe) \! & \mapsto  W \spe^2 +n\ec \nonumber \\
&\!\!\!\!+\! \frac{\lambda}{n} \!\left(\!  \frac{W}{\spe} \!+\! n\tc\!\right )\! \left (\!W\! \left (\!\frac{\frac{\lambda}{n} W}{\frac{\dl}{ \frac{W}{\spe}\! +\! n\tc}\! -\! (\!1\! +\! \lambda \tc\!)} \!\right )^2\! +\! n\ec\! \right)
\end{align}}
which can be minimized numerically.

\medskip
\subsubsection{Hard deadline}
In this section, the constraint on the execution time can be written as: 
\[\sum_i \left ( \frac{w_i}{\spe_i} + \tc + \frac{w_i}{\spr_i} + \tc \right ) \leq \dl.\]

\begin{lemma}
	\label{lemma.tight.hd.div}
In the \multis \hd model with divisible chunk, the deadline should be tight.
\end{lemma}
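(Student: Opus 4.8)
The plan is to mimic the argument already used for the \ed case (and for Lemma~\ref{lemma.samereexec}) and reduce the statement to the one-variable convexity Lemma~\ref{conv.tight.undiv}. Suppose, for contradiction, that there is an optimal solution $\cup_i (w_i,\spe_i,\spr_i)$ in which the deadline is not tight, i.e.\ $\sum_i \left(\frac{w_i}{\spe_i} + \tc + \frac{w_i}{\spr_i} + \tc\right) < \dl$. Freeze all the $w_i$, all the $\spe_i$, and every re-execution speed $\spr_j$ with $j\neq 1$, and view both the objective and the constraint as functions of the single variable $\spr_1>0$. In the expression of $\ex{\energy}$ for the \multic\ \multis\ model, the only term depending on $\spr_1$ is $\lambda\left(\frac{w_1}{\spe_1}+\tc\right) w_1 \spr_1^2$, so $\ex{\energy} = \energy_0 + \alpha_0 \spr_1^2$ with $\energy_0$ constant and $\alpha_0 = \lambda w_1\left(\frac{w_1}{\spe_1}+\tc\right) > 0$ (positive since $w_1>0$ and $\frac{w_1}{\spe_1}+\tc>0$); likewise the worst-case time constraint reads $\tps_0 + \frac{w_1}{\spr_1} \le \dl$ with $\tps_0$ constant and coefficient $w_1>0$.

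These are exactly the hypotheses of Lemma~\ref{conv.tight.undiv}, with $A_0=\energy_0$, $\alpha_0$ as above, $x=\spr_1$, $A_1=\tps_0$, $\alpha_1=w_1$, $A_2=\dl$: the map $\spr_1\mapsto \energy_0+\alpha_0\spr_1^2$ is strictly increasing, while $\spr_1\mapsto \tps_0+\frac{w_1}{\spr_1}$ is strictly decreasing and diverges as $\spr_1\to 0^+$, so the minimum over the feasible interval is attained when the constraint is tight. Concretely, since the supposed optimum has strict slack, one may decrease $\spr_1$ by a small positive amount: this keeps $\spr_1>0$, keeps the single inequality satisfied, leaves the remaining $2n-1$ unknowns untouched, and strictly lowers $\ex{\energy}$, contradicting optimality. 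Hence the deadline is tight at any optimal solution. An equivalent and even shorter route is to apply Lemma~\ref{conv.tight.div} to the pair $(\spr_1,\spr_2)$ exactly as in the proof of Lemma~\ref{lemma.samereexec}.

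I do not expect a genuine obstacle here; the statement is an immediate corollary of the convexity lemmas established earlier. The only points deserving a word of care are (i) that we argue about one variable at a time, which is legitimate because producing \emph{any} strictly cheaper feasible solution — not necessarily one reaching tightness in a single step — already contradicts optimality, and (ii) checking $\alpha_0>0$ so that the objective is genuinely strictly increasing in $\spr_1$. Once the lemma is in place, the rest of the \hd\ analysis proceeds as in the \ed\ case: one uses tightness to express $\spr$ as a function of $\spe$ and then, after the analogue of Theorem~\ref{th.gopi} forcing equal chunk sizes and a common execution speed, reduces the problem to a low-dimensional function that is minimized numerically.
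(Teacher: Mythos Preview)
Your proof is correct and follows exactly the paper's own argument: fix every variable but $\spr_1$ and invoke Lemma~\ref{conv.tight.undiv}. You spell out the verification that $\alpha_0>0$ and the form of the constraint more carefully than the paper does, but the approach is identical.
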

\RRme{
\begin{proof}
This result is obvious with Lemma~\ref{conv.tight.undiv}: if we have a solution such that the 
deadline is not tight, if we fix every variable but $\spr_1$ (the re-execution speed of the first task), we
can improve the solution with a tight deadline.
\end{proof}
}{}
\begin{lemma}
	\label{lemma.spr.hd.div}
In the optimal solution, for all $i,j$, $\lambda \left ( \frac{w_i}{\spe_i} + \tc \right ) \spr_i^3 = \lambda \left ( \frac{w_j}{\spe_j} + \tc \right ) \spr_j^3 $.
\end{lemma}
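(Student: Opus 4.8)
The plan is to argue exactly as in the proof of Lemma~\ref{lemma.samereexec}: fix the number of chunks $n$, the sizes $w_i$ and the first-execution speeds $\spe_i$, and optimize over the re-execution speeds $\spr_1,\dots,\spr_n$ alone. Isolating the dependence on these variables, the expected energy reads $\ex{\energy} = \energy_0 + \sum_i a_i \spr_i^2$ where $a_i = \lambda\left(\frac{w_i}{\spe_i}+\tc\right) w_i > 0$ and $\energy_0$ gathers all the terms independent of the $\spr_i$, while the worst-case time reads $\wtps = \tps_0 + \sum_i \frac{w_i}{\spr_i}$ with $\tps_0$ constant. By Lemma~\ref{lemma.tight.hd.div} the deadline is tight at the optimum, so the effective constraint is the equality $\sum_i \frac{w_i}{\spr_i} = K$ with $K = \dl - \tps_0$ (and there is no solution unless $K>0$).

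It then remains to solve: minimize $\sum_i a_i \spr_i^2$ over $\spr_i>0$ subject to $\sum_i \frac{w_i}{\spr_i} = K$. I would first note that the feasible set $\{\spr : \spr_i>0,\ \sum_i w_i/\spr_i \le K\}$ is convex (each $\spr_i \mapsto w_i/\spr_i$ is convex on $\mathbb{R^{\star}_+}$), that $\sum_i a_i \spr_i^2$ is strictly convex, and that since the objective is increasing in each coordinate its minimum is attained on the boundary $\sum_i w_i/\spr_i = K$, where the constraint gradient never vanishes; hence the Lagrange conditions are necessary. Cancelling the partial derivative of $\sum_i a_i \spr_i^2 - \mu\left(\sum_i \frac{w_i}{\spr_i} - K\right)$ with respect to $\spr_i$ yields $2 a_i \spr_i = -\mu \frac{w_i}{\spr_i^2}$, i.e. $a_i \spr_i^3 = -\frac{\mu}{2} w_i$. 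Dividing by $w_i>0$, the quantity $\frac{a_i \spr_i^3}{w_i} = \lambda\left(\frac{w_i}{\spe_i}+\tc\right)\spr_i^3$ is independent of $i$, which is exactly the claim. Equivalently, one can give a pairwise exchange argument in the style of Lemma~\ref{conv.tight.div}: if two indices $i,j$ violated the identity, freeze every other re-execution speed so that $\frac{w_i}{\spr_i}+\frac{w_j}{\spr_j}$ is pinned to a constant, substitute $\spr_j$ as a function of $\spr_i$, and differentiate $a_i\spr_i^2 + a_j\spr_j^2$; optimality forces $a_i\spr_i^3/w_i = a_j\spr_j^3/w_j$, a contradiction.

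I expect the only delicate point to be the reduction itself: checking that optimizing over the $\spr_i$ with everything else frozen is legitimate (it is, since any feasible perturbation of the $\spr_i$ alone preserves feasibility of the $w_i,\spe_i$), that Lemma~\ref{lemma.tight.hd.div} genuinely lets us replace the deadline inequality by an equality in this restricted subproblem, and that $K>0$ so the subproblem is nondegenerate. Once that scaffolding is in place, deriving $2a_i\spr_i^3 = -\mu w_i$ and simplifying $a_i/w_i = \lambda(\frac{w_i}{\spe_i}+\tc)$ is routine algebra.
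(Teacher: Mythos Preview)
Your proposal is correct and essentially matches the paper's approach: the paper freezes all $w_i,\spe_i$, uses Lemma~\ref{lemma.tight.hd.div} to tighten the deadline, and then carries out precisely your pairwise exchange alternative (substituting $\spr_i$ via the constraint and differentiating $u_iw_i\spr_i^2+u_jw_j\spr_j^2$) to obtain $u_i\spr_i^3=u_j\spr_j^3$. Your global Lagrangian formulation is a slightly cleaner packaging of the same first-order condition, but the content is identical.
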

\RRme{
\begin{proof}
Consider any solution to our problem. Thanks to Lemma~\ref{lemma.tight.hd.div}, we know that the 
deadline should be tight. Let $T_i$ and $T_j$ two tasks of er-execution speed $\spr_i,\spr_j$. We 
show that those speed can be optimally defined such that 
$\lambda \left ( \frac{w_i}{\spe_i} + \tc \right ) \spr_i^3 = \lambda \left ( \frac{w_j}{\spe_j} + \tc \right ) \spr_j^3 $.
Let us call $u_i = \lambda \left ( \frac{w_i}{\spe_i} + \tc \right )$ and $u_j = \lambda \left ( \frac{w_j}{\spe_j} + \tc \right )$.

The minimization problem for those speeds can be written as $A_0 +  u_i w_i \spr_i^2 + u_j w_j \spr_j^2$
under the constraint that $A_1 +\frac{w_i}{\spr_i} + \frac{w_j}{\spr_j} = \dl$ where neither
$A_0$ nor $A_1$ depends on $\spr_i,\spr_j$.

Replacing $\spr_i \! =\! \frac{w_i}{\dl - A_1 \! - \! \frac{w_j}{\spr_j}}$ in the function we need to 
minimize, we obtain $A_0 +  u_i w_i \left ( \frac{w_i}{\dl - A_1 - \frac{w_j}{\spr_j}}\right )^2 + u_j w_j \spr_j^2$.
A simple differentiation gives $-2 w_j u_i\frac{w_i^3}{\left (\dl - A_1 - \frac{w_j}{\spr_j} \right )^3 \spr_j^2} + 2 u_j w_j \spr_j$.
Another differentiation shows the convexity of the function we want to minimize.
Hence one can see that the function is minimized when $u_j \spr_j^3 = u_i \left (\frac{w_i}{\dl - A_1 - \frac{w_j}{\spr_j}} \right )^3 = u_i \spr_i^3$.
\end{proof}
}{The proofs for both lemmas are available in the companion research report~\cite{rr-abmrr}.}

\begin{lemma}
\label{lemma.weakgopi}
If we enforce the condition that the execution speeds of the chunks are all equal, and that the re-execution speeds
of the chunks are all equal, then all chunks should have same size  in the optimal solution.
\end{lemma}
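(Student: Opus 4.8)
The plan is to follow the same template as Theorem~\ref{th.gopi}, but to exploit that here the execution speed and the re-execution speed are both forced to be common to all chunks, which makes the argument much shorter. I start from an optimal solution of the restricted problem: it uses some number of chunks $n$, a common execution speed $\spe$, a common re-execution speed $\spr$, and sizes $w_1,\dots,w_n$ with $\sum_i w_i = W$. I then show that the solution obtained by keeping $n$, $\spe$ and $\spr$ and replacing the $w_i$ by $n$ equal chunks of size $W/n$ is still feasible and has no larger expected energy, the decrease being \emph{strict} unless the $w_i$ were already all equal; optimality then forces $w_i = W/n$ for every $i$.

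The first, and key, step is to observe that the hard-deadline constraint does not depend on how $W$ is split. With $\spe_i = \spe$ and $\spr_i = \spr$ for all $i$, the constraint $\sum_i\left(\frac{w_i}{\spe}+\tc+\frac{w_i}{\spr}+\tc\right)\le\dl$ becomes $\frac{W}{\spe}+\frac{W}{\spr}+2n\tc\le\dl$, which involves only $n$, $\spe$, $\spr$ and the fixed total $W$. Hence an $n$-chunk solution at speeds $\spe,\spr$ is feasible if and only if the equal-size one is, so the candidate solution is feasible whenever the starting optimum is.

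The second step is to isolate the part of the expected energy that depends on the partition. Expanding
\[
\ex{\energy}=\sum_i\left(w_i\spe^2+\ec\right)+\lambda\sum_i\left(\frac{w_i}{\spe}+\tc\right)\left(w_i\spr^2+\ec\right)
\]
and using $\sum_i w_i = W$ to regroup, every term is a function of $W,n,\spe,\spr$ only, except the single term $\lambda\,\frac{\spr^2}{\spe}\sum_i w_i^2$, whose coefficient $\lambda\spr^2/\spe$ is strictly positive. Consequently, at fixed $n,\spe,\spr$, minimizing $\ex{\energy}$ over the partitions of $W$ is precisely minimizing $\sum_i w_i^2$ subject to $\sum_i w_i = W$.

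The last step is the elementary convexity fact: by Cauchy--Schwarz (equivalently, by strict convexity of $x\mapsto x^2$), $\sum_i w_i^2 \ge \frac{1}{n}\bigl(\sum_i w_i\bigr)^2 = \frac{W^2}{n}$, with equality if and only if $w_1=\dots=w_n=W/n$. Chaining the three steps gives the lemma. I do not anticipate any genuine obstacle here: the only point worth stressing is that, because the chunk speeds are forced to be identical and the hard-deadline constraint is affine in the $w_i$, re-balancing the sizes leaves feasibility untouched and only decreases $\sum_i w_i^2$ — whereas in Theorem~\ref{th.gopi} the per-chunk speeds varied, forcing the more delicate two-case comparison with $\spe_A$ and $\spe_B$.
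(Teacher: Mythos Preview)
Your proof is correct and follows essentially the same approach as the paper: both observe that with common speeds the energy reduces to a constant plus a positive multiple of $\sum_i w_i^2$, while the hard-deadline constraint depends only on $W$ and $n$, so minimizing $\sum_i w_i^2$ under $\sum_i w_i = W$ forces equal sizes. Your version is more explicit (you exhibit the coefficient $\lambda\spr^2/\spe$ and invoke Cauchy--Schwarz), whereas the paper simply states the reformulation as $\alpha\sum w_i + \beta\sum w_i^2$ and calls the conclusion obvious.
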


\begin{proof}
This result is obvious since the problem can be reformulated as the minimization of 
$\alpha \sum w_i + \beta \sum w_i^2$ where neither $\alpha$ nor $\beta$ depends on any 
$w_i$, under the constraints $\gamma \sum w_i + \zeta \leq \dl$, and 
$\sum w_i = W$. It is easy to see the result when there are only two chunks since there is only one 
variable, and the problem generalizes well in the case of $n$ chunks.
\end{proof}




We have not been able to prove a stronger result than Lemma~\ref{lemma.weakgopi}. However
we conjecture the following result:
\begin{conjecture}
In the \multis \hd, in the optimal solution, the re-execution speeds are identical, the deadline is tight.
The re-execution speed is  equal to $\spr=\frac{W}{(D-2n\tc)\spe - W} \spe$.
Furthermore the chunks should have the same size $\frac{W}{n}$ and should be executed at the same speed~\spe.
\end{conjecture}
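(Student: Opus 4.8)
The plan is to mirror the exchange arguments of Theorem~\ref{th.gopi} and of the \multis~\ed theorem, feeding them the structural facts already established for the hard-deadline case. Two of the five assertions come essentially for free: tightness of the deadline is exactly Lemma~\ref{lemma.tight.hd.div}, and Lemma~\ref{lemma.spr.hd.div} states that $\lambda\,(w_i/\spe_i+\tc)\,\spr_i^3$ takes the same value for every $i$ in an optimal solution. Consequently, \emph{once} we know that all chunks share a common size $w_i=W/n$ and a common execution speed $\spe$, the factor $w_i/\spe_i+\tc$ is the same for all $i$, so Lemma~\ref{lemma.spr.hd.div} forces $\spr_i=\spr$ for all $i$; the tight constraint then reads $W/\spe+W/\spr+2n\tc=\dl$, which solves to $\spr=\frac{W}{(\dl-2n\tc)\spe-W}\,\spe$, the announced formula. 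Equivalently, this is the \singlec~\multis~\hd problem of Section~\ref{sec.multis-one} under the substitution $\lambda\leftarrow\lambda/n$, $\tc\leftarrow n\tc$, $\ec\leftarrow n\ec$, exactly as in the expected-deadline analysis. Thus the whole conjecture reduces to the single claim that \emph{in an optimal solution all chunks have the same size and the same execution speed}; Lemma~\ref{lemma.weakgopi} already gives the equal-size part \emph{under the additional hypothesis} that the speeds are equal, so it is really the unconditional equalization of the speeds that is missing.

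\textbf{Core exchange step.}
To establish that claim I would argue by contradiction as in Theorem~\ref{th.gopi}. Suppose an optimal solution contains two chunks $(w_1,\spe_1,\spr_1)$ and $(w_2,\spe_2,\spr_2)$ with $(w_1,\spe_1)\neq(w_2,\spe_2)$ and, without loss of generality, $w_1/\spe_1\geq w_2/\spe_2$. Replace them by two chunks of size $w=\tfrac12(w_1+w_2)$, executed at a common speed $\spe$ chosen as the maximum of $\spe_A=\frac{w_1+w_2}{w_1/\spe_1+w_2/\spe_2}$ and of an auxiliary speed $\spe_B$ designed so that the relevant re-execution sum is also non-increasing, and re-assign the re-execution speeds of the whole modified instance according to Lemma~\ref{lemma.spr.hd.div} together with a tight deadline. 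Since the first-execution time $\sum_i(w_i/\spe_i+\tc)$ does not increase and the re-execution budget is re-fitted exactly, the deadline is still met. It then remains to show that the energy strictly decreases; as in Theorem~\ref{th.gopi} one splits the difference into the first-execution part, where the inequality $w_1\spe_1^2+w_2\spe_2^2-2w\spe^2\geq 0$ is provided by the function $g:u\mapsto u^2+2/u-3$; the $\lambda\ec$ part, controlled by $\spe\geq\spe_A$; and the re-execution part. Once equal size and execution speed are proved, applying the same move to two chunks with $w_1=w_2$, $\spe_1\neq\spe_2$ and using that $g$ vanishes only at $1$ yields the strict improvement that closes the argument, just as in Theorem~\ref{th.gopi}.

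\textbf{Main obstacle.}
The genuinely difficult point is the re-execution contribution to the energy difference. In the \multis~\ed proof one could freeze the \emph{single} common re-execution speed while perturbing only the two merged chunks, since there the re-execution speeds are all equal in the optimum. Here they are not: Lemma~\ref{lemma.spr.hd.div} pins down only the products $\lambda(w_i/\spe_i+\tc)\spr_i^3$, so the $\spr_i$ are coupled across \emph{all} chunks through the shared hard deadline, and merging two chunks rebalances the entire re-execution schedule. Eliminating the $\spr_i$ via this coupling, the re-execution energy is no longer a separable sum over chunks: up to an additive $\lambda\ec$ term and the prefactor $\lambda$ it equals $\big(\sum_j w_j(w_j/\spe_j+\tc)^{1/3}\big)^3 / \big(\dl-\sum_j(w_j/\spe_j+2\tc)\big)^2$, and what has to be shown is that this expression is Schur-convex in the pairs $(w_i,\spe_i)$, hence does not increase under the averaging move above. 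The clean polynomial bookkeeping of Theorem~\ref{th.gopi} relied crucially on separability and does not transfer verbatim; proving the required majorization inequality — or a direct convexity statement for the reduced, two-variables-per-chunk objective — is exactly the step that has so far resisted us, and it is not even a priori excluded that for some parameter ranges the balanced point is only a local optimum. This is why the statement is only conjectured.
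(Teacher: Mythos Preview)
The paper does not prove this statement at all: it is explicitly labeled a conjecture, preceded by the remark that the authors ``have not been able to prove a stronger result than Lemma~\ref{lemma.weakgopi}.'' Your write-up is therefore in full agreement with the paper, since you also stop short of a proof and explain why; there is nothing to compare on the level of arguments because the paper offers none.

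What you add beyond the paper is a more explicit diagnosis of the obstruction. The paper merely states the conjecture and moves on; you correctly observe that the parts about tightness, the formula for $\spr$, and the reduction to the \singlec\ case all follow from Lemmas~\ref{lemma.tight.hd.div} and~\ref{lemma.spr.hd.div} \emph{once} equal chunk sizes and equal execution speeds are established, and that Lemma~\ref{lemma.weakgopi} gives equal sizes only conditionally on equal speeds. Your identification of the real difficulty --- that in the \hd case the optimal $\spr_i$ are only coupled through the relation $\lambda(w_i/\spe_i+\tc)\spr_i^3=\text{const}$ rather than being equal, so the exchange argument of Theorem~\ref{th.gopi} loses separability in the re-execution term --- is accurate and is precisely why the \ed proof does not carry over. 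This is a useful clarification that the paper itself does not spell out.
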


This conjecture reduces the problem to the \singlec \multis problem where 
\begin{itemize}
	\item $\lambda \leftarrow \frac{\lambda}{n}$
	\item $\tc \leftarrow n\tc$
	\item $\ec \leftarrow n\ec$
\end{itemize}
and allows us to write the problem to solve as a two-parameter function:
\RRme{
\begin{equation}
	\label{eq.multis.div.hd}
(n,\spe) \mapsto W \spe^2 +n\ec +  \frac{\lambda}{n} \left(  \frac{W}{\spe} + n\tc\right ) \left (W \left (\frac{W}{(\dl-2n\tc)\spe - W} \spe \right )^2 + n\ec \right)
 \end{equation}
}{
\begin{align}
	\label{eq.multis.div.hd}
(n,\spe)\! \mapsto & W \spe^2 \!+\!n\ec \!+ \! \frac{\lambda}{n}\! \left( \! \frac{W}{\spe}\! +\! n\tc\!\right )\! \times \nonumber\\
& \quad\left (\!W\! \left (\!\frac{W}{(\!\dl\!-\!2n\tc\!)\!\spe\! -\! W} \spe\! \right )^2\! +\! n\ec\! \right)
 \end{align}}
 which can be solved numerically.

\section{Simulations}
\label{sec.experiments}

\RRme{
\begin{figure*}
\begin{center}
\includegraphics[width=\textwidth]{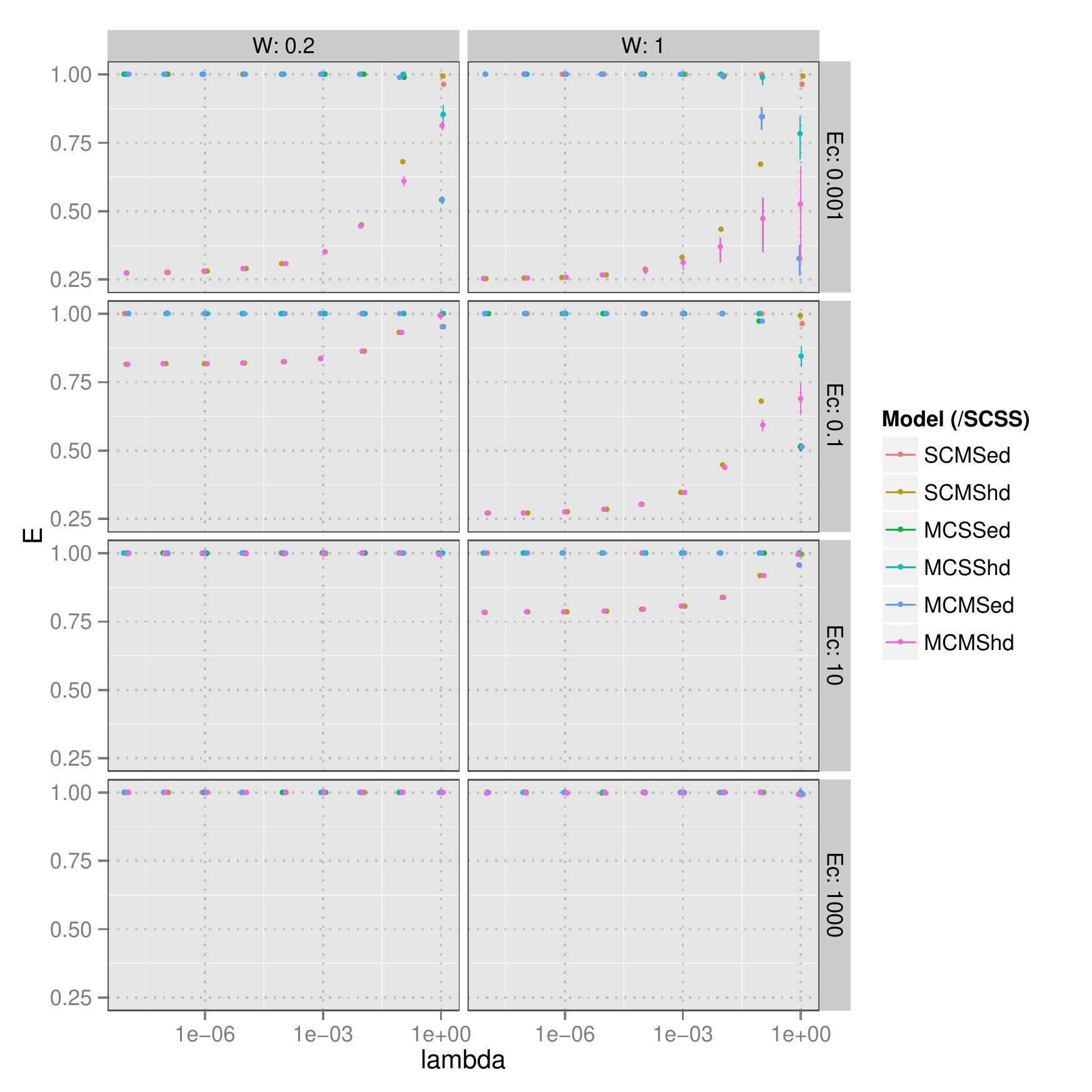}\\
\end{center}
\caption{Comparison with \singlec \singles.}
\label{fig.scss1}
\end{figure*}

\begin{figure*}
\begin{center}
\includegraphics[width=\textwidth]{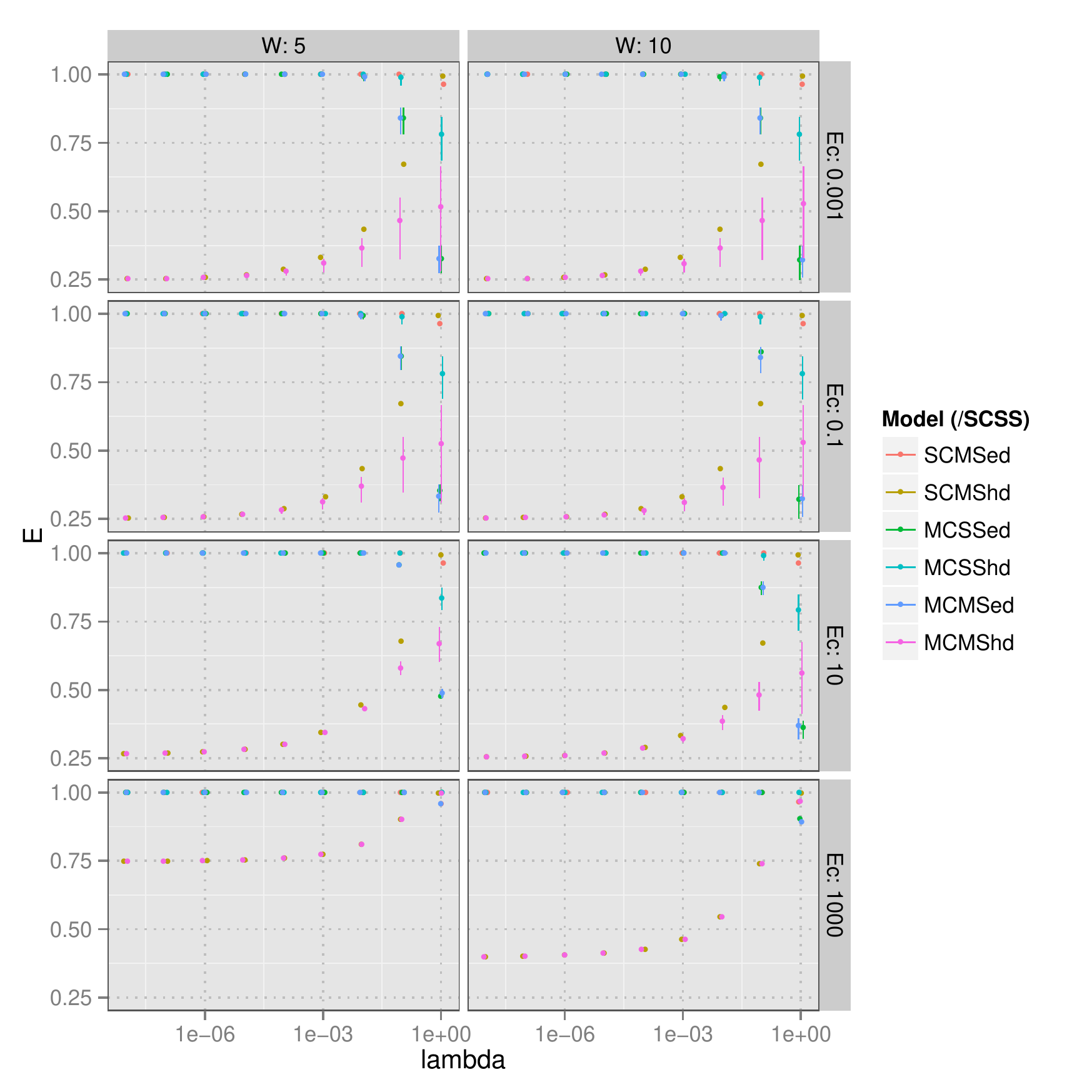}\\
\end{center}
\caption{Comparison with \singlec \singles.}
\label{fig.scss2}
\end{figure*}

\begin{figure*}
\begin{center}
\includegraphics[width=\textwidth]{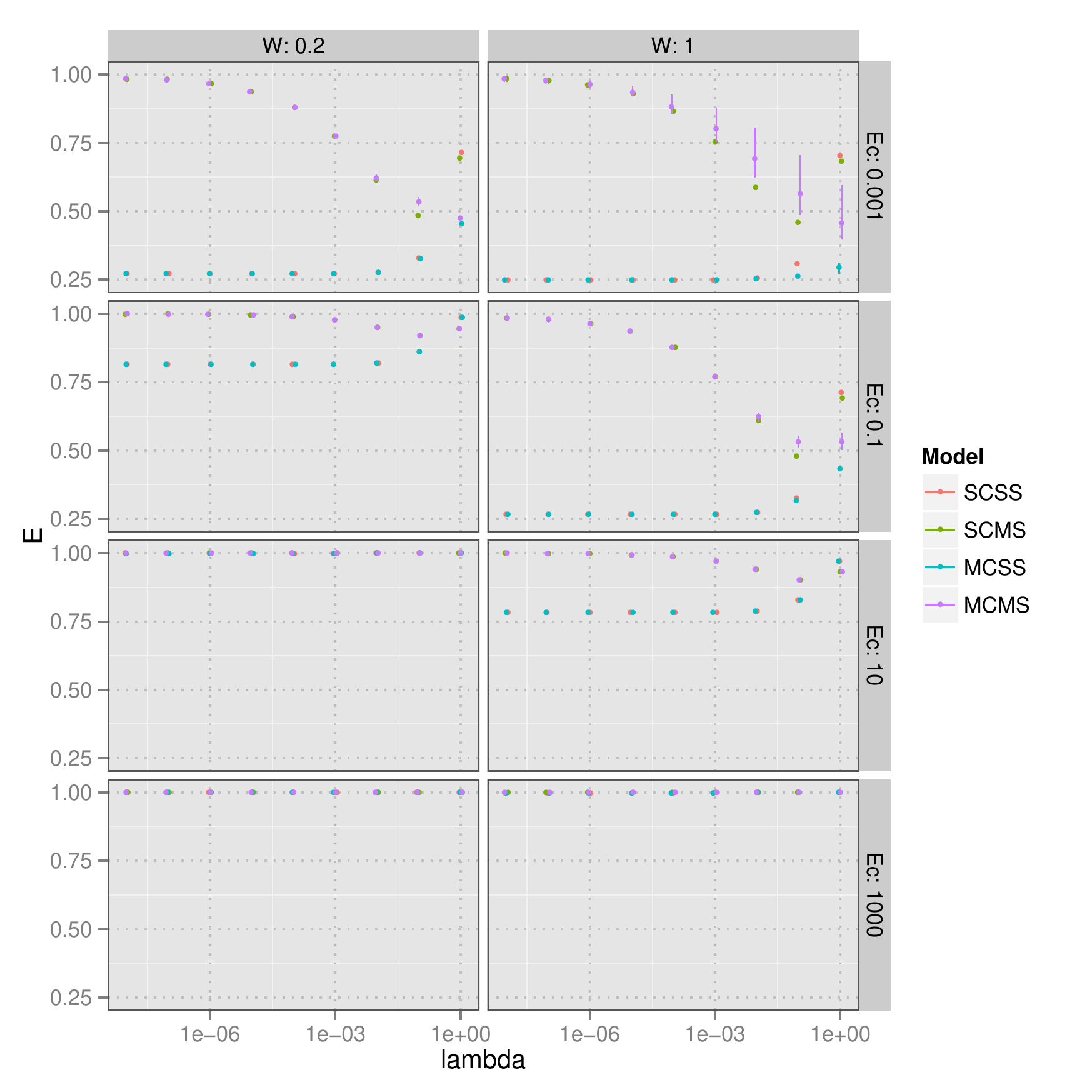}\\
\end{center}
\caption{Comparison \hd versus \ed.}
\label{fig.edhd1}
\end{figure*}

\begin{figure*}
\begin{center}
\includegraphics[width=\textwidth]{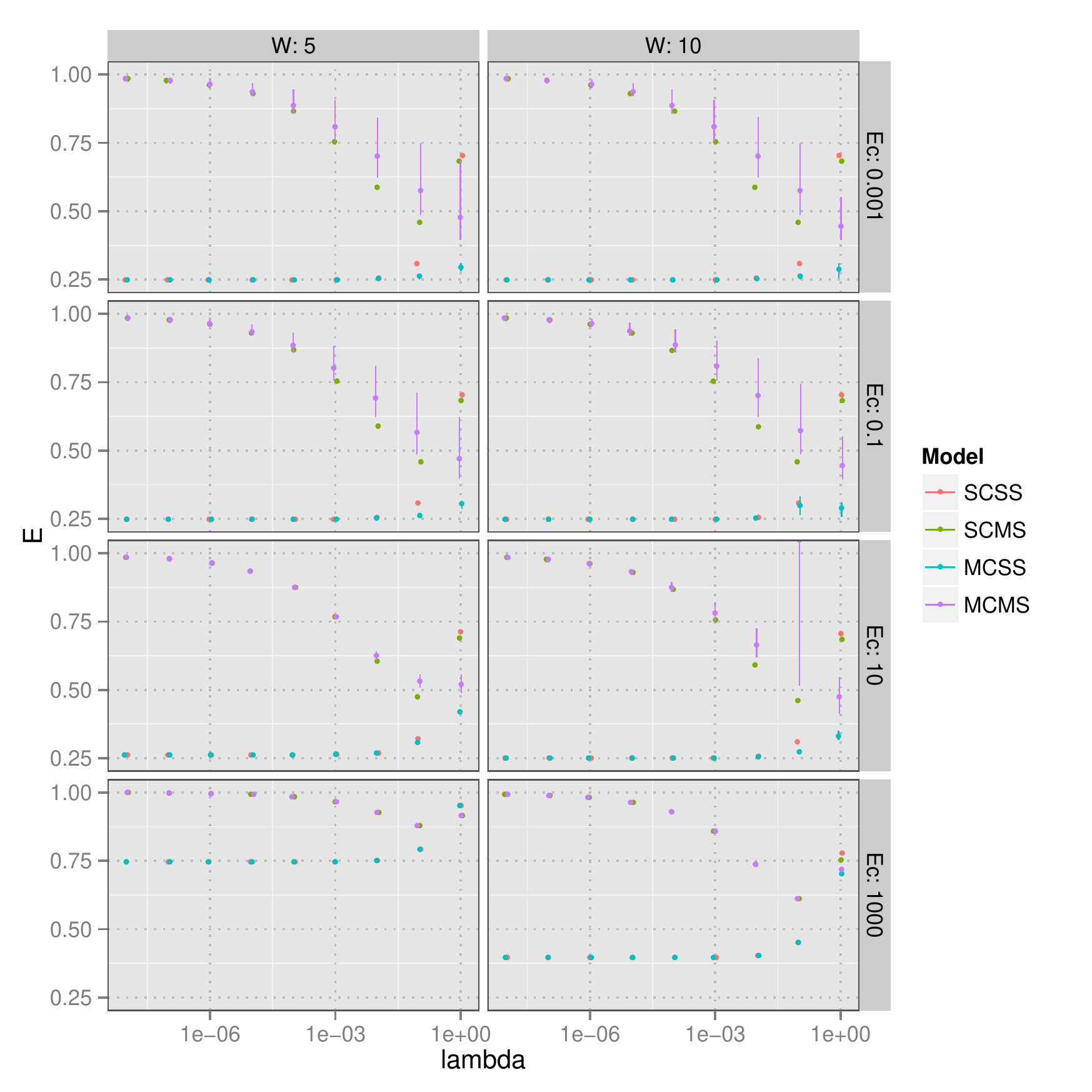}\\
\end{center}
\caption{Comparison \hd versus \ed.}
\label{fig.edhd2}
\end{figure*}
}{
\begin{figure*}
\begin{center}
\includegraphics[width=\textwidth]{conf-SCSS.pdf}\\
\end{center}
\caption{Comparison with single chunk single speed.}
\label{fig.scss}
\end{figure*}

\begin{figure*}
\begin{center}
\includegraphics[width=\textwidth]{conf-EDHD.pdf}\\
\end{center}
\caption{Comparison hard deadline versus expected deadline.}
\label{fig.edhd}
\end{figure*}
}

\subsection{Simulation settings}
We performed a large set of simulations in order to illustrate the differences between all the models studied
in this paper, and to show to which extent each additional degree of freedom improves the results, i.e., allowing for multiple
speeds instead of a single speed, or for multiple smaller chunks instead of a single large chunk. All these experiments
are conducted under both constraint types, expected and hard deadlines.

We envision reasonable settings by varying parameters within the following ranges:
\begin{itemize}
	\item $\frac{W}{\dl} \in [0.2,10]$
	\item $\frac{\tc}{\dl} \in [10^{-4},10^{-2}]$
	\item $\ec \in [10^{-3},10^{3}]$
	\item $\lambda \in [10^{-8},1]$.
\end{itemize}
In addition, we set the deadline to $1$. Note that since we study $\frac{W}{\dl}$ and 
$\frac{\tc}{\dl}$ instead of $W$ and \tc, we do not need to study how the variation of the deadline
impacts the simulation, this is already taken into account.

We use the Maple software to solve numerically the different minimization problems. Results are showed from two perspectives:
on the one hand (\RRme{Figures~\ref{fig.scss1} and~\ref{fig.scss2}}{Figure~\ref{fig.scss}}), for a given constraint (\hd or \ed), we normalize all variants according to
\RRme{\singles \singlec}{SCSS (single chunk single speed)}, under the considered constraint. 
For instance, on the plots, the energy consumed by \RRme{\multic \multis (denoted as MCMS)}{MCMS (multiple chunks multiple speeds)} 
for \hd \RRme{}{(hard deadline)} is divided by the energy consumed by \RRme{\singlec \singles (denoted as SCSS)}{SCSS} for \hd, while the energy of \RRme{\multic \singles
(denoted as MCSS)}{MCSS} for \ed is normalized by the energy of \RRme{\singlec \singles}{SCSS} for \ed.

On the other hand (\RRme{Figures~\ref{fig.edhd1}
and~\ref{fig.edhd2}}{Figure~\ref{fig.edhd}}), we study the impact of the constraint hardness on the energy consumption. For each solution form (\singles or
\multis, and \singlec or \multic), we plot the ratio energy consumed for \ed over energy consumed for \hd.

Note that for each figure, we plot for each function different values that depend on the 
different values of $\tc/\dl$ (hence the vertical intervals for points where $\tc/\dl$ has an impact).
In addition, the lower the value of $\tc/\dl$, the lower the energy consumption.

\subsection{Comparison with single speed}

At first, we observe that the results are identical for any value of $W/\dl$, up to a translation of \ec 
(see $(W/\dl=0.2, \ec=10^{-3})$ vs. $(W/\dl=5,\ec=1000)$ \RRme{}{or $(W/\dl=1,\ec=10^{-3})$ vs. $(W/\dl=5,\ec=0.1)$} on \RRme{Figures~\ref{fig.scss1} 
and~\ref{fig.scss2}}{Figure~\ref{fig.scss}}\RRme{, or see $(W/\dl=1,\ec=10^{-3})$ vs. $(W/\dl=5,\ec=0.1)$ on Figures~\ref{fig.scss1} 
and~\ref{fig.scss2}}{}, for instance).

Then the next observation is that for \ed, with a small $\lambda$ ($<10^{-2}$), \multic or 
\multis models do not improve the energy ratio. 
This is due to the fact that,
in both expressions for energy and for execution time, the re-execution term is negligible relative to the
execution one, since it has a weighting factor $\lambda$.
However, when $\lambda$ increases, if the energy of a 
checkpoint is  small relative to the total work (which is the  general case), we can see a huge 
improvement (between 25\% and 75\% energy saving) with \multic.


On the contrary, as expected, for small $\lambda$'s, re-executing at a different speed has a huge 
impact for \hd, where we can gain up to $75\%$ energy when the failure rate is low. 
We can indeed run at around half speed during the first execution (leading to the $1/2^2=25\%$ 
saving), and at a high speed for the second one, because the very low failure probability avoids the 
explosion of the expected energy consumption. For both \multic and \singlec, this saving
ratio increases with $\lambda$ (the energy consumed by the second execution cannot be neglected any more, and both executions 
need to be more balanced), the latter being more sensitive to $\lambda$. But the former is the only configuration where
$\tc$ has a significant impact: its performance decreases with $\tc$; still it remains strictly better 
than \RRme{\singlec \multis}{SCMS}.

\subsection{Comparison between \ed and \hd}

As before, the value of $W/\dl$ does not change the energy ratios up to translations of \ec. 
As expected, the difference between the \ed and \hd models is very important for the \singles 
variant: when the energy of the re-execution is negligible (because of the failure rate parameter), it would 
be better to spend as little time as possible doing the re-execution in order to have a speed as slow 
as possible for the first execution, however we are limited in the \singles \hd model by the fact that 
the re-execution time is fully taken into account (its speed is the same as the first execution, and
there is no parameter~$\lambda$ to render it negligible). 

Furthermore, when $\lambda$ is minimum, \multis consumes the same energy for \ed and for \hd. 
Indeed, as expected, the $\lambda$ in the energy function makes it possible for the
re-execution speed to be maximal: it has little impact on the energy, and it is optimal for the execution time;
this way we can focus on slowing down the first execution of each chunk. For \hd, we already run the
first execution at half speed, thus we cannot save more energy, even considering \ed instead.
When $\lambda$ increases, speeds of \hd cannot
be lowered but the expected execution time decreases, making room for a downgrade of the speeds in the \ed problems. 

\section{Conclusion}
\label{sec.conclusion}

In this work, we have studied the energy consumption of a divisible computational workload
on volatile 
platforms. In particular, we have studied the expected energy consumption under different deadline
constraints: a soft deadline (a deadline for the expected execution time), and a hard deadline (a
deadline for the worst case execution time).  
\RRme{

}{} 
We have been able to show mathematically, for all cases but one, that when using the
\RRme{\multic}{multiple chunks} model\RRme{}{ (MC)}, then 
(i) every chunk should be equally sized; (ii) every execution speed should be
equal; and (iii)  every re-execution speed should also be equal. 
This problem remains open in the
\RRme{\multis \hd}{multiple speeds hard deadline} variant.
\RRme{

}{  }  
Through a set of extensive simulations, we were able to show the following:  (i) when the fault parameter $\lambda$
is small, \RRme{for \ed constraints}{with expected deadline},  the \RRme{\singlec \singles}{single chunk single speed} model \RRme{}{(SCSS)} leads to almost optimal energy consumption. This is not true 
\RRme{for the \hd model}{with hard deadlines}, which accounts equally for execution and re-execution, thereby leading to higher
energy consumption.  Therefore, for the \hd 
model (hard deadline) and for small values of~$\lambda$, the model of choice  should be 
\RRme{the \singlec \multis model}{single chunk multiple speeds}, and that is not intuitive. 
When the fault parameter rate $\lambda$ increases, using a single chunk 
is no longer energy-efficient,
and one should focus on the \RRme{\multic \multis}{
MCMS} model for 
both deadline types.

An interesting direction for future work is to extend this study to the case of an application workflow: instead of dealing
with a single divisible task, we would deal with a DAG of tasks, that could be either divisible (checkpoints can take place
anytime) or atomic (checkpoints can only take place at the end of the execution of some tasks). Again, we can envision
both soft or hard constraints on the execution time, and we can keep the same model with a single re-execution per chunk/task,
at the same speed or possibly at a different speed. Deriving complexity results and heuristics to solve this
difficult problem is likely to be very challenging, but could have a dramatic impact to reduce the energy consumption of many
scientific applications.

\newpage

\bibliographystyle{abbrv}
\bibliography{biblio}

\end{document}